\documentclass[a4paper]{llncs}

\begingroup
  \def\x{\endgroup\ExecuteOptions{dvipdfm}}%
  \ifx\pdfoutput\undefined
  \else
    \ifx\pdfoutput\relax
    \else
      \ifcase\pdfoutput
        
      \else
        \def\x{\endgroup\ExecuteOptions{pdftex}}%
	
      \fi
    \fi
  \fi
\x

\usepackage{tikz}  

\usepackage{graphicx}
\usepackage{amsfonts}
\usepackage{amsmath}
\usepackage{amssymb}
\usepackage{mathtools}
\usepackage{longtable}
\usepackage{ulem}
\usepackage{booktabs}
\usepackage{url}
\usepackage{ulem}
\usepackage[T1]{fontenc}
\usepackage{times}
\usepackage[bookmarks=true,colorlinks=true]{hyperref}    
\usepackage{array}
\usepackage[boxed,vlined,linesnumbered,noend]{algorithm2e}
\usepackage{tikz}
\usetikzlibrary{matrix,arrows,decorations.pathmorphing, ,decorations.pathreplacing, calc, shapes,patterns, fit, positioning}
\usepackage{float}
\newtheorem{fact}{Fact}
\SetKwInOut{Input}{Input}
\SetKwInOut{RestInput}{}
\SetKwInOut{Output}{Output}
\SetKwInOut{RestOutput}{}
\SetKwInOut{Actors}{Actors}
\SetAlFnt{\small}
\newcommand{\keywords}[1]{\par\addvspace\baselineskip
\noindent\keywordname\enspace\ignorespaces#1}


\makeatletter

\makeatother

\newcommand{\papertitle}{Practical Fault-Tolerant Data Aggregation}

\newcommand{\paperthanks}{The study is cofounded by the European Union from resources of the European Social Fund. Project PO KL ,,Information technologies: Research and their interdisciplinary applications'', Agreement UDA-POKL.04.01.01-00-051/10-00.
(Third author). Contribution of the second author is supported by Polish National Science Center - DEC 2013/09/B/ST6/02258}

\makeatletter
\newlength\min@xx
\newcommand*\xxrightarrow[1]{\begingroup
  \settowidth\min@xx{$\m@th\scriptstyle#1$}
  \@xxrightarrow}
\newcommand*\@xxrightarrow[2][]{
  \sbox8{$\m@th\scriptstyle#1$}  
  \ifdim\wd8>\min@xx \min@xx=\wd8 \fi
  \sbox8{$\m@th\scriptstyle#2$} 
  \ifdim\wd8>\min@xx \min@xx=\wd8 \fi
  \xrightarrow[{\mathmakebox[\min@xx]{\scriptstyle#1}}]
    {\mathmakebox[\min@xx]{\scriptstyle#2}}
  \endgroup}
\newcommand*\xxleftarrow[1]{\begingroup
  \settowidth\min@xx{$\m@th\scriptstyle#1$}
  \@xxleftarrow}
\newcommand*\@xxleftarrow[2][]{
  \sbox8{$\m@th\scriptstyle#1$}  
  \ifdim\wd8>\min@xx \min@xx=\wd8 \fi
  \sbox8{$\m@th\scriptstyle#2$} 
  \ifdim\wd8>\min@xx \min@xx=\wd8 \fi
  \xleftarrow[{\mathmakebox[\min@xx]{\scriptstyle#1}}]
    {\mathmakebox[\min@xx]{\scriptstyle#2}}
  \endgroup}

\newcommand{\sect}[1]{\mbox{Sect.\,\ref{#1}}}  
\newcommand{\subsect}[1]{\mbox{Subsect.\,\ref{#1}}}  

\usepackage{xcolor}


%
%

%


\DeclareMathOperator{\bigO}{O}

\newcommand{\BigO}[1]{\bigO\left(#1\right)}


\newcommand{\Node}{\mathrm{\mathbf{N}}}
\newcommand{\Agg}{\mathrm{\mathbf{AGG}}}
\newcommand{\lAgg}{\mathrm{\mathbf{Agg}}}


\newcommand{\Enc}[2]{\mathrm{Enc_{#1}\left( #2\right)}}

\newcommand{\sk}{\mathrm{sk}}


\newcommand{\anonymous}[2]{#1} 

\newcommand{\compressaffil}[2]{#1} 


\begin{document}

\title{\papertitle}
\titlerunning{\papertitle}

\author{%
\anonymous{
\mbox{Krzysztof Grining}, \mbox{Marek Klonowski},  
\mbox{Piotr Syga}\thanks{\paperthanks}
\compressaffil{}{}
}
{Of anonymous authors}
}
\authorrunning{
\anonymous{K. Grining, M.~Klonowski, P.~Syga
}{Anonymous authors}}

\institute{
\anonymous{ \compressaffil{Faculty of Fundamental Problems of Technology,}{}
 Wroc{\l}aw University of Technology\\
 \compressaffil{\email{%
 \{firstname.secondname\}@pwr.edu.pl} }{}
}{...
}}

\maketitle

\begin{abstract}
 During Financial Cryptography 2012  Chan et al. presented a novel privacy-protection fault-tolerant data aggregation protocol. 
 Comparing to previous work, their scheme guaranteed provable privacy of individuals and could work even if some 
 number of users refused to participate. 

In our paper we demonstrate that despite its merits, their method provides unacceptably low accuracy of aggregated data for a wide range of 
assumed parameters and cannot be  used in majority of real-life systems. To show this we use both  analytic and experimental methods. 
 
 Additionally, we present a precise data aggregation protocol that provides provable level of security even when facing 
massive failures of nodes.  Moreover, the protocol requires significantly less computation 
(limited  exploiting of heavy cryptography) than most of currently known fault tolerant aggregation protocols and offers better security guarantees that make it suitable for systems of limited resources (including sensor networks).  To obtain our result we relax however the model and allow some limited communication between the nodes.

\keywords{Data aggregation,  differential privacy,  fault tolerance}
\end{abstract}

 \section{Introduction}\label{sect:intro}

Aggregation of data is a fundamental problem that has been approached from different perspectives. Recently there were  
many papers published, that presented methods of data aggregation that preserve privacy of individual users. 
More precisely, the goal of the protocol is to reveal some general aggregated statistics (like an average value) 
while keeping value of each individual secret, even if the aggregator is untrusted (e.g., tries to 
learn input of individual users). The general notion is to design a protocol that allows the aggregator to learn a perturbed sum, but no intermediate results.

In~\cite{PaniShi} Shi et al. have introduced a new approach to aggregation of information 
in distributed systems based on combining cryptographic techniques and typical ``methods of differential privacy''
, that was originally used for  protecting privacy of individuals in statistical data bases after  some data was revealed. 
The privacy preservation is usually realized by adding some carefully prepared noise to the aggregated values. Similar approach has been independently proposed in~\cite{Rastogi}.

Those papers put a new light on the problem of privacy preserving data aggregation -- the authors constructed a protocol that 
can be very useful, however its applicability is limited to some narrow class of scenarios due to few shortcomings. 
One of them  is the fact that \textbf{all}  of the members of a group of users have to 
cooperate to compute the aggregated data. Thus, this approach is not  appropriate  for a dynamic, real-life systems (e.g. mobile sensor networks),
even though it seems to be a perfect solution for fixed, small system of devices, where a series of data is generated periodically for a long time and the number of failures is always small (e.g. collecting measurements of electricity consumption in a neighborhood). \\
Another important  protocol, called Binary Protocol, has been introduced  in~\cite{Hubercik}, wherein authors presented the first privacy preserving aggregation protocol that  is, to some extent, fault tolerant. 
In our paper we focus on showing some shortcomings of the solution from~\cite{Hubercik} (by pointing out the extent to which it is fault tolerant)
as well as present our approach to privacy preserving and  fault tolerant data aggregation.

\subsection{Our Contribution and Organization of the Paper }\label{ssect:org}

In~\subsect{ssect:notation} we briefly describe the model 
assumed in our paper and provide some notation used throughout it as well as introduce some definitions we use further on. 
In~\sect{sect:BP} we recall the Binary Protocol by Chan et al. presented in~\cite{Hubercik}, 
followed by discussion of its disadvantages in~\sect{sect:diss}. 
In~\sect{sect:prot} we present and analyze our protocol addressing some of the Binary Protocol's issues. 
\sect{sect:prev} is devoted to recalling some of the previous work related to the 
problem addressed in the paper. Finally, in~\sect{sect:conclusion} we conclude and indicate some possible future work.
The contribution of our paper is twofold. 

\begin{itemize}
 \item We show that the fault tolerant protocol from~\cite{Hubercik} (called Binary Protocol)  
 offers very low level of accuracy of aggregated data even for small number of faults for any reasonable 
 size of the network. This holds despite very good asymptotic guarantees.  

 \item On the positive side we construct a modified protocol that offers much better accuracy and significantly 
lower computational requirements. We assume however a weaker security model where users may trust a few others
and we allow some limited, local communication between users. 
This assumption is justified in various scenarios, specifically when users have some \textbf{local} knowledge about few other participants. 
This is a natural assumption in electricity meters, where privacy concerns is that the adversary can deduce i.e. the sleep/work habits 
or the number of inhabitants in the household. Your neighbors knows your habits anyway.  Similarly, in cloud services or social network, 
where you naturally have some friends or users to whom you give your data on your own free will. More precisely, 
\textbf{all} my neighbors/friends can beak my privacy cooperating easier then any outer party.   
\end{itemize}


 \section{ Definitions and Tools}\label{ssect:notation}

Below we present some definitions and facts that will be used throughout this paper. We will denote the set of real numbers by $\mathbb{R}$, integers by $\mathbb{Z}$ and natural numbers by $\mathbb{N}$.

\begin{definition} (Symmetric Geometric Distribution). Let $\alpha > 1$.
We denote by $Geom(\alpha)$ the symmetric geometric distribution that takes integer values such that the probability mass
function at $k \in \mathbb{Z}$ is $\frac{\alpha -1 }{\alpha +1} \cdot \alpha^{-|k|}$. 
\end{definition}

\begin{fact}\label{fuckt2} (From \cite{Hubercik}) Let $\epsilon>0$. Let $u,v$ be integers such that $|u-v| \leq \Delta$ for fixed $\Delta \in \mathbb{N^+}$ . Let $r$ be a random variable having distribution 
 $Geom(\exp(\frac{\epsilon}{\Delta}))$. Then for any integer $k$ 

$$Pr[v+r=k] \leq \exp(\epsilon)\Pr[u+r=k].$$
\end{fact}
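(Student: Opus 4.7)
The plan is to evaluate both sides of the inequality directly from the probability mass function of $\text{Geom}(\exp(\epsilon/\Delta))$ and then reduce the entire claim to the reverse triangle inequality. Let $\alpha = \exp(\epsilon/\Delta)$, so that $\alpha > 1$ as required by the definition.

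First I would rewrite the two probabilities. Since $r$ is symmetric geometric, $\Pr[v+r=k] = \Pr[r=k-v] = \frac{\alpha-1}{\alpha+1}\,\alpha^{-|k-v|}$ and similarly $\Pr[u+r=k] = \frac{\alpha-1}{\alpha+1}\,\alpha^{-|k-u|}$. The normalizing factor $\frac{\alpha-1}{\alpha+1}$ cancels when taking the ratio, leaving
$$\frac{\Pr[v+r=k]}{\Pr[u+r=k]} \;=\; \alpha^{\,|k-u|\,-\,|k-v|}.$$

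Next I would bound the exponent. By the reverse triangle inequality, $|k-u| - |k-v| \leq |u-v| \leq \Delta$, using the hypothesis on $u,v$. Since $\alpha > 1$, monotonicity of $x \mapsto \alpha^x$ yields $\alpha^{|k-u|-|k-v|} \leq \alpha^{\Delta}$.

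Finally, substituting $\alpha = \exp(\epsilon/\Delta)$ gives $\alpha^{\Delta} = \exp(\epsilon)$, so $\Pr[v+r=k] \leq \exp(\epsilon)\Pr[u+r=k]$, as claimed. There is no real obstacle here; the proof is essentially a one-line manipulation once the PMF is written out, and the only ingredient beyond algebra is the reverse triangle inequality applied to the exponent.
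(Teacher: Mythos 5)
Your proof is correct: writing out the symmetric geometric PMF, cancelling the normalizing constant, and bounding the exponent via the reverse triangle inequality $|k-u|-|k-v|\leq|u-v|\leq\Delta$ is exactly the standard argument, and $\alpha^{\Delta}=\exp(\epsilon)$ closes it. Note that the paper does not prove this fact at all --- it imports it verbatim from the cited work of Chan et al. --- so there is no in-paper proof to diverge from; your derivation is the expected one and is complete.
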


\begin{definition}\label{DGD} (Diluted Geometric Distribution). Let $\alpha > 1$ and $0< \beta \leq  1$.
A random variable has $\beta$-diluted Geometric distribution $Geom^{\beta}(\alpha)$ if with 
probability $\beta$ it is sampled from $Geom(\alpha)$, and  with probability $1-\beta$ is set to $0$.
\end{definition}

In the same manner as in~\cite{Hubercik}, we use \textit{computational differential privacy} as a measure of privacy protection. This notion has been introduced (in a similar form) in~\cite{CDP} and is in fact a computational counterpart  of differential privacy from~\cite{dW1}.

\begin{definition} (Computational Differential Privacy Against Compromise (from~\cite{Hubercik}))
Suppose the users are compromised by some underlying randomized process $\mathcal{C}$, and we use $C$
to denote the information obtained by the adversary from the compromised users. Let
$\varepsilon,\, \delta >0$. A (randomized) protocol $\Pi$ preserves computational $(\varepsilon,\, \delta)$-differential privacy
(against the compromising process $\mathcal{C}$) if there exists a negligible function $\eta : \mathbb{N}\rightarrow \mathbb{R}^+$
such that for all $\lambda\in \mathbb{N}$, for all $i \in \left\{1,\,2,\,\dots,\,n\right\}$, for all vectors $x,\,y\in\left\{0,\,1\right\}^n$ that differ only at position i, for all probabilistic polynomial-time Turing machines $\mathcal{A}$, for any output 
$b\in\left\{0,\,1\right\}$, \\
$$\Pr_{\mathcal{C}_i}\left[\mathcal{A}\left(\Pi\left(\lambda,\,x\right),\,C\right)=b\right]\leq e^{\varepsilon}\Pr_{\mathcal{C}_i}\left[\mathcal{A}\left(\Pi\left(\lambda,\,x\right),\,C\right)=b\right]+\delta+\eta\left(\lambda\right)~,$$
where the probability is taken over the randomness of $\mathcal{A}$, $\Pi$ and $\mathcal{C}_i$, which denotes the
underlying compromising process conditioning on the event that user $i$ is uncompromised.
\end{definition}
In a similar manner to regular differential privacy, we say that protocol $\Pi$ preserves computational $\varepsilon$-differential privacy if it preserves computational 
$(\varepsilon,\,0)$-differential privacy.
The intuition behind this definition is as follows. Every party has some bit $b$. From observing some processing of data, it is not feasible for any computationally bounded 
adversary to  learn too much about $b$. This should hold with probability at least $1-\delta$.

 \section{Protocol by Chan et al. -- Description}\label{sect:BP}

In the paper~\cite{Hubercik} authors propose a fault tolerant, privacy preserving data aggregation protocol which they named Binary Protocol. 
The purpose of the protocol is to allow some untrusted Aggregator $\Agg$, to learn the sum of values $x_i$, $1\leq i\leq n$, 
where where $x_i$ is kept by the $i$-th user. We will denote $i$-th user by $\Node_i$. The idea is based on earlier work~\cite{PaniShi}, in particular the Block Aggregation protocol. 
In this setting, we do not have a trusted party who can collect the data and perform some specific actions to preserve privacy (i.e. add noise of appropriate magnitude). The users themselves have to be responsible for securing their privacy by adding noise from some specific distribution, encrypting the noisy value and sending it to the Aggregator. This problem requires combination of both cryptographic and privacy preserving techniques. See that we have essentially two adversaries here. First is an external one, against whom we have to use cryptography to protect the communication between users and the Aggregator. This external adversary should not be able to decipher anything, including noisy sum of all data. On the other hand, the Aggregator himself is an adversary as well. This adversary, however, should be able to decrypt only the noisy sum (not the single user noisy values) and should not be able to compromise the privacy of any single user. 
The general notion behind Block Aggregation is to generate a random secret key $sk_i$ for each of $n$ users as well as an additional $sk_0$ given to the Aggregator, 
such that $\sum_{i=0}^n{sk_i}=0$. 
Before sending the encrypted data, $i$-th user  adds noise $r_i$ coming from Diluted Geometric Distribution (Def.~\ref{DGD} in Sect.~\ref{ssect:notation}). We will denote the noisy data of $i$-th user by $\tilde{x}_i = x_i + r_i$.
Namely, each user transmits  $\Enc{\mathrm{sk}_i}{\tilde{x}_i}$ so that upon receiving all shares and having $sk_0$, the secret keys cancel out and the Aggregator is left with the desired noisy sum. One may easily note that as long as each user transmits its value, $\Agg$ may use $sk_0$ to decipher the sum. 
The symmetric geometric distribution $Geom(\alpha)$ can be viewed as a discrete version of Laplace distribution, which is widely used in differential privacy papers. Having discrete values is essential for the cryptography part of the protocol. The dilution parameter $\beta$ is the probability that a specific user will add noise from $Geom(\alpha)$. This is done because, intuitively, we want at least one user to add a geometric noise, but we do not want too many of these noises to keep the necessary noise sufficiently small.
The problem that occurred with so-called Block Aggregation is that whenever a single user fails to deliver their share (and what is really important -- their $sk_i$), the blindings do not cancel out, hence making it impossible for the Aggregator to decipher the desired value.

Binary Protocol presented in~\cite{Hubercik} addresses the incompleteness of the data by arranging the users in a virtual binary tree. One may visualize each user as a leaf of a binary tree, with all the tree-nodes up to the root being virtual. The Aggregator is identified with an additional tree-node, which is located ``above'' the root and is connected only to the tree-root. In order to simulate the tree structure, the users and $\Agg$ are equipped with appropriate secret keys and generate random noises for each of the tree-layer, where layer is equivalent to the depth the tree-node is at, i.e., the first layer consists of root, second layer consists of two direct children of the root, and so on. Finally, at  the $\left\lceil \log n\right\rceil+1^{\text{st}}$ layer consists of the leaves. Finally, each user performs Block Aggregation protocol for each of the layers, i.e., they generate their block $\Enc{\mathrm{sk}_i}{\tilde{x}_i}$ for the $\left\lceil \log n\right\rceil+1^{\text{st}}$ 
layer and their shares for larger blocks of 
higher layers. In each of the layers, the noise $r_i$ is taken from a different distribution, namely $\beta$ parameter for diluted geometric distribution is derived as follows: $\beta=\min\left(\frac{1}{\left|B\right|}\ln \frac{1}{\delta_0},\,1\right)$, where $\left|B\right|$ is the number of tree-nodes in the layer and $\delta_0>0$ is a privacy parameter. One may note that, the more tree-nodes in the layer, the blinding becomes sparser.
If all users present their shares the problem is reduced to the original Block Aggregation. Namely, the Aggregator may decrypt the root-layer block, obtaining the sum 
of all the $\tilde{x}_i$ with the blinding canceled out. However, if at least one user $\Node_i$
fails, all the blocks containing $\Node_i$ will suffer the same issues as Block Aggregation with a missing user. Namely, large, 
uncanceled random disturbance. In order to provide the aggregation of the working users, the authors allow the Aggregator to find such a covering of the tree from the blocks of 
different layers that all the working users are covered, none of the failed users is included and that $\Agg$ is able to recover the result.

Binary Protocol provides security under computational differential privacy model and results in $\BigO{n\log n}$ communications exchanged in the network and guarantees 
$\tilde{\bigO}\left(\left(\log n\right)^{\frac{3}{2}}\right)$ error. This notion  hides significant constants. Nevertheless in a practical setting, those results are 
less satisfying than one would expect. The issues concerning the privacy and the resulting error are raised in~\sect{sect:diss}.

\section{Analysis of Chan et al.'s Protocol -- The Magnitude of Error}\label{sect:diss}
In this section we will show that the error magnitude in Binary Protocol is significant for moderate number of participants.
Note that in ~\cite{Hubercik} the authors assumed that each user has data $x_i \in \{0,1\}$, which means that the range of the sum of aggregated data is $[0,n]$.
Thus, error of magnitude $\gamma n$ shall be regarded large already for moderate constant  $\gamma$.
They have also shown that the magnitude of error is $o(n)$ asymptotically. However, in practical applications we are also interested in performance of this protocol for moderate values of $n$, i.e. $n \leqslant 2^{14}$. We will show that for a reasonable range of values of the number of users $n$ and number of failures $\kappa$ the error is large ($\gamma n$ for some  constant $\gamma$) with significant probability. Obviously, as the $n$ increases, the Binary Protocol becomes better because of the asymptotic guarantees. However, our aim here is to show, that if the number of participants is at most moderate (i.e. $2^{12}$) or the number of failures is significant (i.e. $\kappa = \log_2(n)$, $\kappa = \lfloor\frac{n}{2^6}\rfloor$) then the accuracy of Binary Protocol is too low to be used. Furthermore, if the number of users is quite small (i.e. $2^{10}$ or less), then even for $\kappa = 5$ the errors generated are unacceptably high.
\par
We aim to show a precise magnitude of error in the Binary Protocol. 
To achieve this, we will use some subtler method than these presented by the authors of ~\cite{Hubercik}. 
To support our analytic analysis we show results of simulations. Note that in ~\cite{Hubercik} the authors described only simulations without failures, 
even though their protocol is specifically designed to handle failed users. 

\subsection{Analytical Approach}{\label{ssect:expectedNoise}}
The size of error depends on the number of failed users and the way they are distributed amongst all participants. Let us fix $n$ as the number of participants. 
Like the authors of ~\cite{Hubercik}, we assume for simplicity that $n$ is a power of $2$. Our reasoning can be however generalized for every $n$. 
We also assume that $\kappa$ users have failed.
We assume that these failed users are uniformly distributed amongst all participants, which seems to be reasonable in most scenarios. 
The error generated during the Binary Protocol is the sum of all noises in the aggregated blocks. Throughout this section we will use following notation, $\delta_0 = \frac{\delta}{\lfloor\log_2(n)\rfloor + 1}$, where $\delta$ is a privacy parameter. Also we have $\beta_i=\min\left(\frac{1}{\left|B_i\right|}\ln \frac{1}{\delta_0},\,1\right)$, where $B_i$ is size of the node on $i$th level of the tree. Because we assumed that $n$ is a power of $2$, so the binary tree is full, then $B_i$ is essentially the number of leaves being descendants of any node on $i$th level of the tree.
In our analysis, first we show an exact formula for the expected value of the number of noises added by individual nodes. The exact formula is given in the following theorem.
\begin{theorem}{\label{ssect:thmEY}}
Let $Y$ be a random variable which denotes the number of noises added during the Binary Protocol. Let $\kappa > 0$ and fix $n$ as the number of participants. Then, the expected value of random variable $Y$ is given by the following formula:
$$
EY = n-\kappa + n\cdot \sum_{i=1}^{\log_2(n)-1}\left(\frac{\binom{n-\frac{n}{2^i}}{\kappa}}{\binom{n}{\kappa}} \cdot  \left(\beta_i - \beta_{i+1}\right)\right),
$$
where $\beta_i=\min\left(\frac{1}{\left|B_i\right|}\ln \frac{1}{\delta_0},\,1\right)$.
\end{theorem}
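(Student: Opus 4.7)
The plan is to decompose $Y$ into per-user indicators and then exploit the exchangeability of users under the uniform failure placement. I would write $Y = \sum_{j=1}^n Y_j$, where $Y_j = 1$ if user $j$ is working and the noise $j$ generated at the level of the block that the aggregator decrypts to cover $j$ is non-zero, and $Y_j = 0$ otherwise. The aggregator's greedy covering assigns each working $j$ to its maximal failure-free ancestor block; let $\ell_j$ denote the level of that block, with root at level $0$ and leaves at level $\log_2 n$. Conditional on $j$ working and on $\ell_j = i$, user $j$'s Bernoulli noise-switch at level $i$ equals $1$ with probability $\beta_i$, independently of the failure pattern, so $\E{Y_j \mid j \text{ works}} = \E{\beta_{\ell_j} \mid j \text{ works}}$. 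By exchangeability of users, this quantity does not depend on $j$, giving
\[
\E{Y} = n \cdot \frac{n-\kappa}{n} \cdot \E{\beta_{\ell_1} \mid 1 \text{ works}} = (n-\kappa) \, \E{\beta_{\ell_1} \mid 1 \text{ works}}.
\]

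The conditional distribution of $\ell_1$ is a direct count. Given $1$ works, the $\kappa$ failed users are uniform on the other $n-1$; the ancestor block of $1$ at level $i$ contains $n/2^i - 1$ other users, so
\[
q_i := \Pr[\ell_1 \leq i \mid 1 \text{ works}] = \Pr[\text{block of } 1 \text{ at level } i \text{ is failure-free}] = \binom{n - n/2^i}{\kappa} \Big/ \binom{n-1}{\kappa},
\]
and, using that ancestor blocks are nested in $i$, $\Pr[\ell_1 = i \mid 1 \text{ works}] = q_i - q_{i-1}$ with the convention $q_{-1} := 0$. Abel summation (summation by parts) then yields
\[
\E{\beta_{\ell_1} \mid 1 \text{ works}} \;=\; \sum_{i=0}^{\log_2 n} \beta_i(q_i - q_{i-1}) \;=\; \beta_{\log_2 n}\, q_{\log_2 n} \;+\; \sum_{i=0}^{\log_2 n - 1} q_i\,(\beta_i - \beta_{i+1}).
\]

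To close the proof I would invoke three small observations: $\beta_{\log_2 n} = 1$ since the leaf block has size one and the $\min$ in the definition of $\beta_i$ saturates for any reasonable privacy parameter; $q_{\log_2 n} = 1$ trivially; and $q_0 = \binom{0}{\kappa}/\binom{n-1}{\kappa} = 0$, which is exactly where the standing hypothesis $\kappa > 0$ enters and makes the $i = 0$ term vanish. What remains is $1 + \sum_{i=1}^{\log_2 n - 1} q_i(\beta_i - \beta_{i+1})$. Multiplying by $(n-\kappa)$ and invoking the elementary identity $(n-\kappa)/\binom{n-1}{\kappa} = n/\binom{n}{\kappa}$ converts every $(n-\kappa)\,q_i$ into $n \binom{n - n/2^i}{\kappa}/\binom{n}{\kappa}$, yielding exactly the claimed closed form. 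The only genuinely non-routine step is the initial modelling---identifying that in the aggregator's covering each working user is counted in exactly one decrypted block, so that $Y$ really equals $\sum_j Y_j$ with $Y_j$ Bernoulli of parameter $\beta_{\ell_j}$ conditionally on the covering; everything thereafter is linearity of expectation, a standard Abel summation, and routine binomial arithmetic.
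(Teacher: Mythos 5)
Your proof is correct and reaches the claimed formula, but it is organized around a genuinely different decomposition than the paper's. The paper works level by level: it introduces $X_i$, the number of failure-free (``clean'') blocks on level $i$, shows $EX_i = 2^i\binom{n-n/2^i}{\kappa}/\binom{n}{\kappa}$, identifies the aggregating blocks via $X_i^* = X_i - 2X_{i-1}$ (clean blocks whose parent is not clean), writes the level-$i$ noise count as a binomial with parameters $2^{\log_2 n - i}X_i^*$ and $\beta_i$, and then telescopes the resulting sum. You instead decompose per user, $Y=\sum_j Y_j$, use exchangeability to reduce everything to the law of $\ell_1$ (the level of user $1$'s maximal clean ancestor block), compute its CDF $q_i = \binom{n-n/2^i}{\kappa}/\binom{n-1}{\kappa}$, and Abel-sum; the identity $(n-\kappa)/\binom{n-1}{\kappa}=n/\binom{n}{\kappa}$ then reconciles your conditioned-on-``$1$ works'' parametrization with the paper's unconditioned one. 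The two routes are dual ways of evaluating the same double sum over (user, level) pairs, and both hinge on the same hypergeometric probability and the same summation by parts. Your version is somewhat leaner --- it dispenses with the auxiliary variables $X_i$, $X_i^*$, $Y_i$ --- and it has the merit of making explicit two facts the paper uses silently: that $\beta_{\log_2 n}=1$ (needed to absorb the boundary term of the Abel summation) and that $\kappa>0$ kills the root term via $q_0=0$. What the paper's route buys in exchange is the intermediate formula for $EX_i^*$, the expected number of aggregating blocks per level, which is of independent interest for understanding the covering's structure. One cosmetic caveat: $Y$ counts users whose dilution switch fires (i.e.\ who sample from $Geom(\alpha)$ at all), not users whose sampled noise is literally non-zero --- a $Geom(\alpha)$ draw can equal $0$; your computation uses the correct Bernoulli$(\beta_{\ell_j})$ event, so only the opening sentence's wording should be adjusted.
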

Proof of this theorem can be found in Appendix. It is based on  combinatorial and probabilistic techniques. Now we  show a lower bound for this value for limited range of $n$. We present it in the following
\begin{lemma}{\label{ssect:lemmaEY}}
Let $2^4 \leqslant n \leqslant 2^{21}$ and $\delta = 0.05$, then $EY$ has a following lower bound:
$$
EY \geqslant n-\kappa - n\cdot \left( e^{-\frac{8\kappa}{n}} + \frac{\ln(\frac{\log_2(n)+1}{\delta})}{8} \cdot \left( e^{-\frac{16\kappa}{n}} - e^{-\frac{8\kappa}{n}} \right)\right).
$$
\end{lemma}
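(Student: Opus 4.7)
The plan is to start from the exact formula for $EY$ given in Theorem~\ref{ssect:thmEY} and upper bound the magnitude of the (negative) correction term $n\sum_{i=1}^{\log_2 n - 1}\frac{\binom{n-n/2^i}{\kappa}}{\binom{n}{\kappa}}(\beta_i-\beta_{i+1})$ in closed form. The first ingredient is to locate the single level at which $\beta_i$ transitions from its saturated value $1$ to the sub-saturated value $\ln(1/\delta_0)/|B_i|$. For $\delta = 0.05$ and $2^4 \le n \le 2^{21}$, one verifies directly that $\delta_0 = \delta/(\log_2 n + 1)$ satisfies $e^{-8} < \delta_0 \le e^{-4}$, so $\ln(1/\delta_0) \in (4,8)$ throughout the range. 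Consequently $\beta_i = 1$ exactly when $|B_i| \le 4$ and $\beta_i < 1$ when $|B_i| \ge 8$, pinpointing the transition to the level with $|B| = 8$, i.e.\ to the index $i$ satisfying $n/2^i = 8$.

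With the transition located, I would partition the sum into three pieces. Terms with $|B_i|\le 4$ (both $\beta_i=\beta_{i+1}=1$) contribute zero. The single transition term, at the index $i$ with $n/2^i = 8$, equals $\frac{\binom{n-8}{\kappa}}{\binom{n}{\kappa}}\bigl(\ln(1/\delta_0)/8 - 1\bigr)$, which is negative. For the remaining indices, where both $\beta$-values are sub-saturated, the difference simplifies to $\beta_i - \beta_{i+1} = -\ln(1/\delta_0)\cdot 2^i/n$, so this tail becomes $-\ln(1/\delta_0)\sum_{s\in\{16,32,\dots,n/2\}} s^{-1}\binom{n-s}{\kappa}/\binom{n}{\kappa}$ after reindexing by block size $s = n/2^i$.

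The main analytic ingredient is the elementary inequality $\binom{n-s}{\kappa}/\binom{n}{\kappa}\le(1-s/n)^\kappa\le e^{-s\kappa/n}$, obtained termwise from $(n-s-j)/(n-j)\le(n-s)/n$. Applied to the transition term, this produces $n\,e^{-8\kappa/n}\bigl(1-\ln(1/\delta_0)/8\bigr)$. Applied to the sub-saturated tail it reduces the contribution to $n\,\ln(1/\delta_0)\sum_s s^{-1}e^{-s\kappa/n}$, and the key observation is that consecutive terms of this sum (doubling $s$) have ratio at most $\tfrac{1}{2} e^{-s\kappa/n}\le \tfrac{1}{2}$, so the whole series is dominated by twice its largest (first) term, giving $\sum_s s^{-1}e^{-s\kappa/n}\le \tfrac{1}{8}e^{-16\kappa/n}$.

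Combining the two contributions yields the bound $n\,e^{-8\kappa/n}+ \tfrac{n\ln(1/\delta_0)}{8}\bigl(e^{-16\kappa/n}-e^{-8\kappa/n}\bigr)$ on the magnitude of the correction, which after substituting $\ln(1/\delta_0)=\ln((\log_2 n+1)/\delta)$ matches the claimed inequality exactly. I expect the main obstacle to be the careful verification that the transition level is uniquely and uniformly located across the full parameter range, specifically ensuring that $\ln(1/\delta_0)$ does not cross $8$ or dip below $4$ anywhere in $2^4\le n\le 2^{21}$; once this is pinned down, the remaining work is a mixture of routine bookkeeping, the binomial bound, and the geometric-series estimate.
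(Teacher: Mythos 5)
Your proposal is correct and follows essentially the same route as the paper's proof: start from the exact formula of Theorem~\ref{ssect:thmEY}, locate the saturation threshold of $\beta_i$ at block size $8$ (using $\ln(1/\delta_0)\in(4,8)$ over the stated range of $n$), apply the bound $\binom{n-s}{\kappa}/\binom{n}{\kappa}\leq e^{-s\kappa/n}$, treat the single transition term separately, and control the sub-saturated tail by a geometric-series estimate yielding $\tfrac{1}{8}e^{-16\kappa/n}\ln(1/\delta_0)$. The only cosmetic difference is that you dominate the tail by twice its first term after reindexing by block size, whereas the paper bounds every exponential by the largest one and sums the $\beta_i$ exactly; both give the identical final expression.
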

Note that if $n < 2^4$ then we have $\beta_i = 0$, which means that every remaining user has to add noise (even if there are no failures, i.e $\kappa=0$). There is no need to give a lower bound in that case, because then the number of noisy inputs is exactly $n-\kappa$. Note also that even though we fixed a specific $\delta$ that is used broadly in previous papers  (including~\cite{Hubercik}), similar reasoning can be made for different values of $\delta$.   
\par
We can use this bound to obtain a following
\begin{corollary}{\label{ssect:corEY}}
Fix $\delta = 0.05$. For $n \leqslant 2^{10}$ and $\kappa = \log_2(n)$, we have $EY \geqslant 0.1n$. 
Similarly, if $\kappa = \lfloor\frac{n}{2^6}\rfloor$, then for $2^6 \leqslant n \leqslant 2^{12}$ we have $EY \geqslant 0.16n$.
\end{corollary}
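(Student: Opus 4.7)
The plan is to apply Lemma~\ref{ssect:lemmaEY} and verify the two inequalities by direct numerical evaluation on the finitely many admissible values of $n$. Since $n$ is assumed to be a power of two (as throughout Subsection~\ref{ssect:expectedNoise}), each regime of the corollary corresponds to a small discrete set: seven values for the first claim and seven for the second.

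First I would rewrite the lemma's bound, with $\delta = 0.05$, in normalized form
\[
\frac{EY}{n} \;\geq\; 1 - \frac{\kappa}{n} - e^{-8\kappa/n} + \frac{\ln\bigl((\log_2 n + 1)/0.05\bigr)}{8}\left(e^{-8\kappa/n} - e^{-16\kappa/n}\right),
\]
observing that the correction term on the right is non-negative because $e^{-x}$ is strictly decreasing. In each regime, the resulting expression then depends on $n$ only through $\kappa(n)$ and $\log_2 n$.

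For the second claim, every power of two with $n \geq 2^6$ gives $\kappa = n/64$ exactly, so $\kappa/n = 1/64$ is constant. The only remaining $n$-dependence is through $\ln((\log_2 n + 1)/0.05)$, which is increasing, so the right-hand side is monotonically increasing in $n$. The worst case is therefore $n = 2^6$, and a single evaluation there yields $EY/n \geq 0.166 > 0.16$.

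For the first claim, write $n = 2^k$ with $k \in \{4, \ldots, 10\}$. The leading part $1 - k/2^k - e^{-8k/2^k}$ is monotonically decreasing in $k$ on this range; for $k \leq 9$ it alone already exceeds $0.1$, so the bound is immediate without invoking the correction. The only tight case is $k = 10$, where the leading part drops to roughly $0.065$, but the logarithmic correction, using $\ln(220)/8 \approx 0.674$ together with $e^{-5/64} - e^{-10/64} \approx 0.07$, contributes the additional $\approx 0.047$ needed to cross the threshold. The principal ``obstacle'' is thus entirely cosmetic: both inequalities hold by fairly narrow margins at the boundary cases $n = 2^{10}$ and $n = 2^6$ respectively, so the arithmetic must be done with some care, but no mathematical ingredient beyond the monotonicity observations above is required.
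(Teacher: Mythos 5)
Your proposal is correct and follows the same skeleton as the paper's own justification: invoke Lemma~\ref{ssect:lemmaEY}, normalize by $n$, determine the monotone direction of the bound in each regime, and evaluate at the extremal admissible $n$. Where you genuinely diverge is the second regime, and there you are right and the paper's one-line argument is not: the paper asserts that $EY/n$ is decreasing in $n$ and plugs in the \emph{largest} value $n=2^{12}$, but once $\kappa/n\equiv 1/64$ is constant the only remaining $n$-dependence in the lemma's lower bound is the factor $\ln\bigl((\log_2 n+1)/\delta\bigr)$ multiplying the positive quantity $\bigl(e^{-1/8}-e^{-1/4}\bigr)/8$, so the bound is \emph{increasing} in $n$ and the binding case is $n=2^6$, exactly as you argue. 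Your endpoint value there is $\approx 0.1659$ (so quoting ``$\geq 0.166$'' is a hair generous, though the conclusion $>0.16$ stands); the paper's evaluation at $n=2^{12}$ gives $\approx 0.174$ but does not by itself cover $n=2^6$, so your version actually repairs a small logical gap while reaching the same constant. In the first regime your treatment coincides with the paper's in substance: the bound is decreasing, the leading term alone suffices for $k\le 9$, and at $n=2^{10}$ the correction term lifts $\approx 0.065$ to $\approx 0.112>0.1$. The only loose end is $n<2^4$ in the first claim, which falls outside the lemma's hypothesis; there the remark following the lemma gives $EY=n-\kappa=n-\log_2 n\ge 0.1n$ directly, so nothing is lost.
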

This comes immediately from Lemma~\ref{ssect:lemmaEY} and an observation that $\frac{EY}{n}$ is a decreasing function of $n$. After plugging the greatest value of $n$ that is allowed by assumptions we obtain these bounds.
\par
Having an exact formula and also a lower bound for the expected number of noises generated, we can calculate the error. Let us assume that we have $m$ noises generated. 
Recall that each of them comes from symmetric geometric distribution $Geom(\alpha)$ with $\alpha > 1$, which is comprehensively described both in ~\cite{PaniShi} and ~\cite{Hubercik}. 
We denote the sum of all noises as $Z$. One can easily see that $EZ = 0$ due to symmetry of distribution. However the expected additional error i.e., $E|Z|$ might be, and we will show that it often is, quite large.

\begin{theorem}{\label{ssect:thmEZ}}
Consider Binary Protocol with fixed $\alpha$, let $m$ denote the number of noises generated, each coming from $Geom(\alpha)$ distribution. Then let $Z$ be a random variable which denotes the sum of generated noises. We have
$$
E|Z| = \int\displaylimits_{0}^{\infty} \frac{4\cdot \alpha \cdot m \cdot \sin{t} \cdot \left(\alpha-1\right)^{2m}}{t \cdot \pi \cdot \left(\alpha^2 - 2\alpha \cos{t} + 1\right)^{m+1}}dt.
$$

\end{theorem}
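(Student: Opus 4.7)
The plan is to compute $E|Z|$ via the characteristic function of $Z$ and then massage the resulting real integral into the stated form by one integration by parts.

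First, I would compute the characteristic function of a single $R \sim Geom(\alpha)$. Splitting the sum at $k=0$ and using the geometric series $\sum_{k\ge 1} \alpha^{-k} e^{itk} = \frac{e^{it}/\alpha}{1-e^{it}/\alpha}$, a short algebraic simplification gives
\[
\varphi(t) \;=\; E[e^{itR}] \;=\; \frac{(\alpha-1)^2}{\alpha^2 - 2\alpha\cos t + 1}.
\]
Since the $m$ noises are independent, $\varphi_Z(t) = \varphi(t)^m = (\alpha-1)^{2m}/(\alpha^2-2\alpha\cos t+1)^m$. Note that $\varphi_Z$ is real and positive, reflecting the symmetry of $Z$ around $0$.

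Next I would invoke the standard identity
\[
|x| \;=\; \frac{2}{\pi}\int_0^\infty \frac{1-\cos(tx)}{t^2}\,dt, \qquad x\in\mathbb{R},
\]
which follows from the Frullani-type evaluation $\int_0^\infty (1-\cos u)/u^2\,du = \pi/2$. Taking expectation with respect to $Z$ and applying Fubini (justified because the integrand is nonnegative) yields
\[
E|Z| \;=\; \frac{2}{\pi}\int_0^\infty \frac{1-\varphi_Z(t)}{t^2}\,dt.
\]

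Finally I would integrate by parts with $u = 1-\varphi_Z(t)$ and $dv = t^{-2}dt$, so $v = -1/t$. The boundary terms vanish: at infinity because $\varphi_Z$ is bounded, and at $0$ because a Taylor expansion gives $1-\varphi_Z(t) = O(t^2)$ (the variance of $Z$ is finite). A direct differentiation yields
\[
-\varphi_Z'(t) \;=\; \frac{2\alpha m (\alpha-1)^{2m}\sin t}{(\alpha^2 - 2\alpha\cos t + 1)^{m+1}},
\]
so integration by parts gives
\[
\int_0^\infty \frac{1-\varphi_Z(t)}{t^2}\,dt \;=\; -\int_0^\infty \frac{\varphi_Z'(t)}{t}\,dt \;=\; \int_0^\infty \frac{2\alpha m (\alpha-1)^{2m}\sin t}{t\,(\alpha^2-2\alpha\cos t+1)^{m+1}}\,dt,
\]
and multiplying by $2/\pi$ produces exactly the claimed expression.

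The only delicate steps are the two limit justifications (Fubini on a nonnegative integrand is immediate, and the vanishing of the boundary term at $t=0$ uses $1-\varphi_Z(t)\sim c t^2$); neither is a real obstacle. The main piece of real work is the characteristic function computation and recognizing that the derivative of $1/(\alpha^2 - 2\alpha\cos t + 1)^m$ already produces both the $\sin t$ in the numerator and the $(m+1)$-th power in the denominator seen in the target identity, so that one integration by parts suffices.
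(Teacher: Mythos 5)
Your proof is correct, but it reaches the formula by a genuinely different route than the paper. You both start the same way, computing the characteristic function of a single $Geom(\alpha)$ noise as $\varphi(t)=\frac{(\alpha-1)^2}{\alpha^2-2\alpha\cos t+1}$ and raising it to the $m$-th power; the divergence is in how $E|Z|$ is extracted from $\varphi_Z$. The paper invokes Pinelis's formula for the characteristic function of the positive part, $\varphi_{Z_+}(t)=\tfrac12[1+\varphi_Z(t)]+\tfrac{1}{2\pi i}\int_{-\infty}^{\infty}[\varphi_Z(t+u)-\varphi_Z(u)]\frac{du}{u}$ (a principal-value integral), uses the symmetry identity $|Z|=2Z_+$, and obtains $E|Z|=2\varphi_{Z_+}'(0)/i$ by differentiating under the integral sign. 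You instead use the elementary representation $|x|=\frac{2}{\pi}\int_0^\infty\frac{1-\cos(tx)}{t^2}\,dt$, apply Tonelli to get $E|Z|=\frac{2}{\pi}\int_0^\infty\frac{1-\varphi_Z(t)}{t^2}\,dt$, and integrate by parts once; the boundary terms vanish exactly as you argue ($1-\varphi_Z(t)=O(t^2)$ near $0$ since $Z$ has finite variance, and $\varphi_Z$ is bounded at infinity). Your route is more self-contained: it avoids the external reference, the principal-value machinery, and the justification of swapping differentiation and integration, replacing them with a nonnegative-integrand Fubini step that is immediate. The one point worth making explicit is that the final integral $\int_0^\infty\frac{\sin t}{t}(\cdots)\,dt$ is only conditionally convergent (the periodic factor does not decay), so the integration by parts should be carried out on $[a,T]$ with limits taken afterwards; this is harmless since the left-hand side converges absolutely, and the paper's formula carries the same implicit interpretation.
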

The proof of this theorem is presented in Appendix. It is based on techniques comprehensively described in~\cite{pinelis}. We also show a lower bound for $E|Z|$ in a following
\begin{lemma}{\label{ssect:lemmaEZ}}
For fixed $n$ and $\epsilon$, which is a privacy parameter, provided that $\alpha = \frac{\epsilon}{\log_2(n)+1}$ and $m = \gamma n$, for $\gamma \in [0,1]$ we have
$$
E|Z| \geqslant c_{n,\epsilon} \cdot \sqrt{\gamma} \cdot \frac{\log_2(n) \cdot \sqrt{n}}{\epsilon \sqrt{\pi}} - 0.1~,
$$ 
where $c_{n,\epsilon}$ is a constant, which is at least $1.4$ for moderate values of $n$ and $\epsilon$.
\end{lemma}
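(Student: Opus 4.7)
The plan is to extract a closed-form lower bound from the integral in Theorem~\ref{ssect:thmEZ} by identifying the scale on which the integrand is concentrated, rescaling, and matching the leading piece against an exact Beta-integral. For small $t$ the expansion $\alpha^{2}-2\alpha\cos t+1=(\alpha-1)^{2}+\alpha t^{2}+\BigO{\alpha t^{4}}$ shows that the heavy denominator factors as $(\alpha-1)^{2(m+1)}\bigl(1+\alpha t^{2}/(\alpha-1)^{2}\bigr)^{m+1}$ up to a controllable multiplicative error, while $\sin t/t=1-\BigO{t^{2}}$. Both remainders are small on the window $t=\BigO{(\alpha-1)/\sqrt{\alpha m}}$ on which the integrand is essentially supported, since outside this window the factor $(1+\alpha t^{2}/(\alpha-1)^{2})^{m+1}$ decays super-polynomially in $m$.

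Inside this window I would substitute $u=\sqrt{\alpha}\,t/(\alpha-1)$ and use the standard identity $\int_{0}^{\infty}(1+u^{2})^{-m-1}\,du=\tfrac{\sqrt{\pi}}{2}\,\Gamma(m+1/2)/\Gamma(m+1)$ to reduce the leading contribution to
$$\frac{2\sqrt{\alpha}\,m}{\sqrt{\pi}\,(\alpha-1)}\cdot\frac{\Gamma(m+1/2)}{\Gamma(m+1)}.$$
Applying the Gautschi-type bound $\Gamma(m+1/2)/\Gamma(m+1)\geq(m+1/2)^{-1/2}$, the first-order expansion $\alpha-1\leq(\epsilon/(\log_{2}n+1))\cdot(1+\BigO{\epsilon/\log_{2}n})$ coming from the intended $\alpha=\exp(\epsilon/(\log_{2}n+1))$, and substituting $m=\gamma n$, this already has the advertised shape $c_{n,\epsilon}\sqrt{\gamma}\cdot\log_{2}n\cdot\sqrt{n}/(\epsilon\sqrt{\pi})$, with $c_{n,\epsilon}\to 2$ as $n$ grows with $\epsilon$ fixed, well above the claimed $1.4$.

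The main obstacle will be the quantitative bookkeeping needed to turn these asymptotic statements into an \emph{honest}, finite-$n$ constant $c_{n,\epsilon}\geq 1.4$ rather than an $n\to\infty$ limit. I would split the integration at some fixed $T$ (say $T=\pi/2$); on $[T,\pi]$ the denominator is bounded below by a quantity that grows geometrically in $m$ relative to the leading scale, so this piece contributes an exponentially small amount easily absorbed into the additive $-0.1$ slack. On $[0,T]$ I would use the explicit inequalities $\sin t\geq t-t^{3}/6$ and $1-\cos t\leq t^{2}/2$ together with the monotone form of the $\Gamma$-ratio bound to obtain a uniform multiplicative factor $(1-\eta_{n,\epsilon})$ on the integrand; it is the joint control of $\eta_{n,\epsilon}$ and the finite-$m$ correction in the $\Gamma$-ratio that pins down $c_{n,\epsilon}$ and forces the $-0.1$ buffer.
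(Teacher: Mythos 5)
Your local analysis near $t=0$ is sound and would, if anything, give a slightly better constant than the paper's: where you rescale by $u=\sqrt{\alpha}\,t/(\alpha-1)$ and invoke $\int_0^\infty(1+u^2)^{-m-1}du=\tfrac{\sqrt{\pi}}{2}\,\Gamma(m+1/2)/\Gamma(m+1)$ with a Gautschi bound, the paper integrates only over a window $\eta_{\alpha,m}=\sqrt{\pi(\alpha-1)^2/(4\alpha m)}$ (about one standard deviation of the model density), evaluates the resulting antiderivative via a ${}_2F_1$ Taylor expansion, and loses a factor of roughly $1-\pi/12$; both routes reach $c_{n,\epsilon}\approx 2\xi c^\ast$ for the same structural reasons, and you also correctly read $\alpha=\exp(\epsilon/(\log_2 n+1))$ through the statement's typo.

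The genuine gap is your dismissal of everything beyond the main lobe. The denominator $(\alpha^2-2\alpha\cos t+1)^{m+1}$ is $2\pi$-periodic, so it returns to its minimum $(\alpha-1)^{2(m+1)}$ at every $t=2k\pi$: the integrand has lobes near each $2k\pi$ whose height matches the main lobe at $0$ and which are damped only by $1/t\approx 1/(2k\pi)$, not super-polynomially in $m$. Your proposal addresses only $[T,\pi]$, where the claim of geometric decay is true, but says nothing about $t\geq\pi$. There the negative lobe on $[2\pi-\eta,2\pi]$ integrates to a genuine $\Theta(1)$ quantity --- indeed the additive $-0.1$ in the statement \emph{is} this lobe: the paper chooses $\eta_{\alpha,m}$ precisely so that $\frac{4\alpha m}{\pi(\alpha-1)^2}\int_{2\pi-\eta}^{2\pi}\frac{\sin t}{t}\,dt\geq -0.1$, so that slack is already fully spent and cannot also absorb your Taylor-remainder bookkeeping. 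Worse, the lobes near $2k\pi$ for $k\geq 1$ have magnitudes of order $1/k$ and the resulting series converges only by cancellation between adjacent positive and negative lobes; no absolute-value bound can control them. The paper's missing ingredient is a sign-cancellation argument: $\omega(t)=t\cdot(\text{integrand})$ is odd about the midpoint of each period $[2k\pi,2(k+1)\pi]$ and positive on its first half, so the decreasing weight $1/t$ makes each such period contribute nonnegatively, leaving only the main lobe and the single negative lobe at $2\pi^-$ to be priced. You need this (or an equivalent) step for $t\geq\pi$; with it added, your argument goes through.
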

\noindent
Having all useful theorems and lemmas we can obtain a following
\begin{corollary}\label{ssect:corEZ}
Consider Binary Protocol for $\delta = 0.05$, $\epsilon = 0.5$, $n \leqslant 2^{10}$ and $\kappa = \log_2(n)$. 
Let $|Z|$ be the absolute value of all noises aggregated during this protocol. We have $E|Z| \geqslant 0.15\cdot n$. Moreover, if we take $\kappa = \frac{n}{2^6}$ and $2^6 \leqslant n \leqslant 2^{12}$ we have $E|Z| \geqslant 0.12 \cdot n$.
\end{corollary}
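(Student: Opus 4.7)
The strategy is to combine the two preceding results: Corollary~\ref{ssect:corEY} (lower bound on $EY$, the expected number of noise contributions) and Lemma~\ref{ssect:lemmaEZ} (lower bound on $E|Z|$ given a deterministic count $m = \gamma n$ of noise contributions). The first regime yields $EY \geq 0.1\,n$; the second yields $EY \geq 0.16\,n$.

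First I would condition on $Y$, the (random) number of noises actually contributing to $Z$. Given $Y = m$, the sum $Z$ is distributed as an $m$-fold convolution of symmetric geometrics with the $\alpha$ of Lemma~\ref{ssect:lemmaEZ}, so that lemma applies pointwise and, after taking the outer expectation,
\begin{equation*}
E|Z| \;\geq\; \frac{c_{n,\epsilon}\,\log_2 n}{\epsilon\sqrt{\pi}}\,E\bigl[\sqrt{Y}\bigr] \;-\; 0.1 .
\end{equation*}
Everything then reduces to lower-bounding $E[\sqrt{Y}]$ by (a constant close to) $\sqrt{EY}$, at which point the Corollary~\ref{ssect:corEY} estimates plug in directly.

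The crux is exactly this $\sqrt{\cdot}$ step, because Jensen's inequality runs in the wrong direction for a concave function. The plan to circumvent the obstacle is to exploit the structural form of $Y$: it decomposes as a sum of independent Bernoulli indicators (one per non-failed user per tree layer, each activated with its layer-dependent probability $\beta_\ell$), so $\mathrm{Var}(Y) \leq EY$. Combining the elementary pointwise inequality $\sqrt{y} \geq \sqrt{a} - (a-y)_+/\sqrt{a}$ (valid for $y \geq 0$, $a > 0$) with the Cauchy--Schwarz estimate $E[(EY - Y)_+] \leq \sqrt{\mathrm{Var}(Y)}$ yields $E[\sqrt{Y}] \geq \sqrt{EY}\,\bigl(1 - 1/\sqrt{EY}\bigr)$, which costs only a negligible factor once $EY$ is of moderate size, as is the case on the stated ranges of $n$.

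All that is left is substitution. Because the relevant factor $\log_2(n)/\sqrt{n}$ is decreasing on both parameter ranges, it suffices to check the endpoints $n = 2^{10}$ (first regime) and $n = 2^{12}$ (second). Plugging in $\epsilon = 0.5$, $c_{n,\epsilon} \geq 1.4$, the respective values of $\gamma \in \{0.1, 0.16\}$ coming from Corollary~\ref{ssect:corEY}, and the reverse-Jensen factor from the previous step, the resulting arithmetic delivers $E|Z| \geq 0.15\,n$ in the first regime and $E|Z| \geq 0.12\,n$ in the second, as claimed.
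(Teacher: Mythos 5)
Your overall route is the same as the paper's: combine the noise-count bound of Corollary~\ref{ssect:corEY} with Lemma~\ref{ssect:lemmaEZ}, observe that $E|Z|/n$ is decreasing in $n$, and evaluate at the right endpoint ($n=2^{10}$, resp.\ $n=2^{12}$). The paper's own justification is exactly this one-liner; it silently substitutes $m=\gamma n$ with $\gamma$ taken from the bound on $EY$, i.e.\ it treats the random count $Y$ as if it were deterministic. You correctly notice that this substitution needs justification because $m\mapsto\sqrt{m}$ is concave and Jensen runs the wrong way, and your conditioning-plus-reverse-Jensen repair (the pointwise inequality $\sqrt{y}\ge\sqrt{a}-(a-y)_+/\sqrt{a}$ combined with $E[(EY-Y)_+]\le\sqrt{\mathrm{Var}(Y)}$) is the right shape of fix and is more careful than what the paper does.

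Two concrete problems remain in your version, though. First, the claim $\mathrm{Var}(Y)\le EY$ via ``$Y$ is a sum of independent Bernoulli indicators'' is not quite true: the activation probability of each surviving user is $\beta_\ell$ for the layer $\ell$ at which that user ends up being aggregated, and that layer is determined by the common random failure pattern. The indicators are independent only \emph{conditionally} on the failure configuration, so the law of total variance leaves you with an extra term $\mathrm{Var}(E[Y\mid F])$ that you have not controlled; you would need a concentration argument for the conditional mean over the randomness of the $\kappa$ failure positions (or to restate the corollary conditionally). Second, the arithmetic margin is thinner than you suggest: with $c_{n,\epsilon}=1.4$ the unpenalized plug-in gives roughly $159.9$ against the target $0.15\cdot 2^{10}=153.6$ in the first regime and roughly $485$ against $491.5$ in the second, so once you multiply in the reverse-Jensen factor $(1-1/\sqrt{EY})$ both targets are missed. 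You need the sharper constant $c_{n,\epsilon}\ge 2$, which the proof of Lemma~\ref{ssect:lemmaEZ} does establish for $\epsilon=0.5$ and $n\ge 2^7$, but which you must invoke explicitly for the numbers to close.
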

This is an immediate result from Lemma.~\ref{ssect:lemmaEZ}, we can see that $\frac{E|Z|}{n}$ is a decreasing function of $n$, so it is enough to plug $n = 2^{10}$ into lower bound for $E|Z|$ for the first part of the corollary and $n=2^{12}$ for the second part of the corollary.
\par
This clearly shows that even if we consider the lower bound for the number of noises and their magnitude, the Binary Protocol is far from perfect for many realistic scenarios, i.e. when the number of participants is moderate. Even worse conclusions will be drawn in~\subsect{ssect:experimental}, where we use the exact formulas given in Theorems \ref{ssect:thmEY} and \ref{ssect:thmEZ} to numerically analyze the errors generated in this protocol.
\subsection{Experimental Approach}{\label{ssect:experimental}}
In~\subsect{ssect:expectedNoise} we gave both exact formulas and lower bounds for the number of noises generated and their sum. Note that the lower bounds are not very tight for many $n$. In this subsection we will show that the errors generated are, in fact, even larger. We will use the exact formulas to precisely calculate the errors numerically. First let us consider the case where $n \leqslant 2^{10}$, $\kappa = \left \lfloor{\log_2(n)}\right \rfloor$, and privacy parameters are $\epsilon = 0.5$, $\delta = 0.05$. See Fig.~\ref{fig:err1}. It clearly shows that the error magnitude in Binary Protocol is, in fact, significantly greater than the lower bound given  in Corollary~\ref{ssect:corEZ}. Now let $2^6 \leqslant n \leqslant 2^{12}$, $\kappa = \frac{n}{2^6}$ and privacy parameters stays the same. See Fig.~\ref{fig:err2}. Again we can see that the error magnitude is unacceptably high, greater than $0.2n$. Note that the noise is independent from the data, so such error could be very problematic, especially if 
the sum of the real data is small (e.g $o(n)$). In such case the noise could be greater than the data itself.
We can also check how great the errors will be for constant value of $\kappa = 5$. See Fig.~\ref{fig:err3}.


\begin{figure}[h!]
    \centering
    \includegraphics[width=0.45\textwidth]{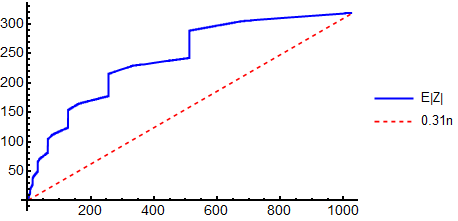}
    \caption{Error magnitude in Binary Protocol with $\epsilon = 0.5$, $\delta = 0.05$ and $\kappa = \left \lfloor{\log_2(n)}\right \rfloor$.}
    {\label{fig:err1}}
\end{figure}

\begin{figure}[h!]
    \centering
    \includegraphics[width=0.45\textwidth]{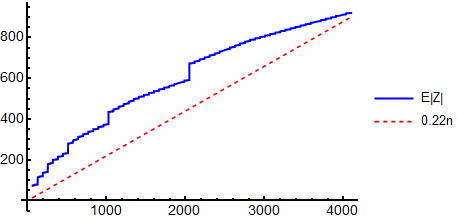}
    \caption{Error magnitude in Binary Protocol with $\epsilon = 0.5$, $\delta = 0.05$ and $\kappa = \left \lfloor{\frac{n}{2^6}}\right \rfloor$.}
    {\label{fig:err2}}
\end{figure}

\begin{figure}[h!]
    \centering
    \includegraphics[width=0.45\textwidth]{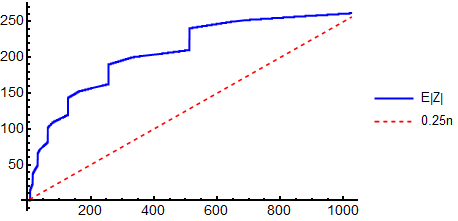}
    \caption{Error magnitude in Binary Protocol with $\epsilon = 0.5$, $\delta = 0.05$ and $\kappa = 5$.}
    {\label{fig:err3}}
\end{figure}

\subsection{Some Other Shortcomings}

Note that in~\cite{Hubercik}, but also in numerous other papers concerning data aggregation with untrusted aggregator, we have a constant privacy parameter $\delta$ (i.e. $\delta = 0.05$). This significantly decreases the amount of noises generated, but is fundamentally incorrect in terms of classic differential privacy standards. 
Such approach allows choosing $\delta$ fraction of the database and revealing their data to everyone. In fact, the magnitude of $\delta$ should be $o(\frac{1}{n})$, where $n$ is the number of users. This is necessary to ensure that the probability of leakage is negligible. More information about this can be found in~\cite{DworkAlgo}.
Furthermore, we assumed that no participants collude with the aggregator. We used the same regime as in~\cite{PaniShi} with $\gamma$ being a lower bound for fraction of non-colluding participants, the magnitude of added noises would be even greater.

 \section{Precise Aggregation Algorithm with Local Communication}\label{sect:prot}

In this part we present an alternative protocol PAALEC (Precise Aggregation Algorithm with Local Communication) that in some scenarios offers much better accuracy of aggregated data when failures occur, while preserving high level of users' privacy protection. 
In fact our protocol works in a substantially different way and for slightly modified model. Thus, despite its performance and accuracy that outperforms the original protocol of Chan at al., they are not fully comparable.

First of all, we assume that users may communicate (also in order to bypass the lower bound pointed out in \cite{HubercikBound}). 
Let us stress that the communication is limited to a small circle of ``neighbors''. The idea behind the presented construction is to take advantage of some natural structures emerging in distributed systems (e.g. social networks) wherein, apart from  logical connections between each user and a server/aggregator there are also some direct links between individual users.   
Clearly, such model is not adequate for some real-life problems discussed in~\cite{Hubercik}, for example in sensor fields with unidirectional communication. Thus there are applications where the original protocol from~\cite{Hubercik} is the only one possible.

\subsection{Modified Model}

We assume that the network consists of $n$ users - $V=\{v_1,v_2,\ldots , v_n\}$ as well as the aggregator $\Agg$ and a set of $k<n$ \textit{local aggregators} $\lAgg_1,\dots ,\, \lAgg_k$. Please note that the local aggregators may be separate entities but without any significant changes they may be selected from the set of regular users $V$. The only issue with this approach is that we have to ensure that the local aggregator is either selected during the aggregation round or it cannot fail during \textbf{a single} execution of aggregation process. We assume that each user is assigned to \textbf{exactly one} local aggregator. We denote the set of nodes assigned to the local aggregator $\lAgg_i$ by $V_i$. An example of the network's topology is depicted in Fig.~\ref{fig:topo}.

\tikzstyle{peers}=[draw,circle,black,bottom color=white,
                  top color= white, text=black,minimum width=10pt]
\tikzstyle{lAgg}=[draw,circle,gray,bottom color=gray, top color=gray,
                       text=white,minimum width=20pt]
\tikzstyle{AGG}=[rectangle, rounded corners, thin,
                           black, fill= black, draw, text=white,
                           minimum width=2.5cm, minimum height=0.8cm]
													
\begin{figure}
\centering
\begin{tikzpicture}[auto, thick,yscale=0.85]
   \node[AGG] (Agg) at (0,4) {\small{$\Agg$}};
  \foreach \place/\name in {{(-4.5,0.5)/a}, {(2,0)/b}, {(2,2)/c}, {(0,2)/d},
           {(-2,0)/e}}
    \node[lAgg] (\name) at \place {$\lAgg$};
		  \foreach \source in {a, b, c, d, e}
    \path (\source) edge (Agg);
		
   %
  \foreach \pos/\i in {above left of/1, left of/2, below left of/3, below of/4}
    \node[peers, \pos = e] (e\i) {};
   \foreach \speer/\peer in {e/e1,e/e2,e/e3, e/e4, e1/e2,e2/e4, e1/e3, e3/e4}
    \path (\speer) edge (\peer);
		
		\node[ellipse,draw=black,thick,dotted,fit=(e) (e1) (e2) (e3) (e4),inner sep=0pt] {};
		
   \foreach \pos/\i in {above right of/1, right of/2, below right of/3}
    \node[peers, \pos =b ] (b\i) {};
   \foreach \speer/\peer in {b/b1,b/b2,b/b3, b1/b2, b2/b3,b3/b1}
   \path (\speer) edge (\peer);
	
			\node[ellipse,draw=black,thick,dotted,fit=(b) (b1) (b2) (b3),inner sep=0pt] {};
	 \foreach \pos/\i in {below right of/1, below of/2}
   \node[peers, \pos =d ] (d\i) {};
   \foreach \speer/\peer in {d/d1,d/d2,d1/d2}
   \path (\speer) edge (\peer);
	
			\node[ellipse,draw=black,thick,dotted,fit=(d) (d1) (d2),inner sep=0pt] {};
   \foreach \pos/\i in {below left of/1, below of/2}
   \node[peers, \pos =a ] (a\i) {};
   \foreach \speer/\peer in {a/a1,a/a2, a2/a1}
   \path (\speer) edge (\peer);
	
			\node[ellipse,draw=black,thick,dotted,fit=(a) (a1) (a2),inner sep=0pt] {};
   \foreach \pos/\i in {above right of/1, right of/2, above of/3}
   \node[peers, \pos =c ] (c\i) {};
   \foreach \speer/\peer in {c/c1,c/c2, c/c3,c3/c1,c1/c2}
   \path (\speer) edge (\peer);
		\node[ellipse,draw=black,thick,dotted,fit=(c) (c1) (c2) (c3),inner sep=0pt] {};
	   \foreach \speer/\peer in {c2/b1,b2/c2,b1/d1,d2/e4,e2/a2,e3/a2}
   \path (\speer) edge (\peer);
	
\end{tikzpicture}
  \caption{Example of a clusterized network with global aggregator ($\Agg$) and local aggregators ($\lAgg$) marked.}\label{fig:topo}
\end{figure}
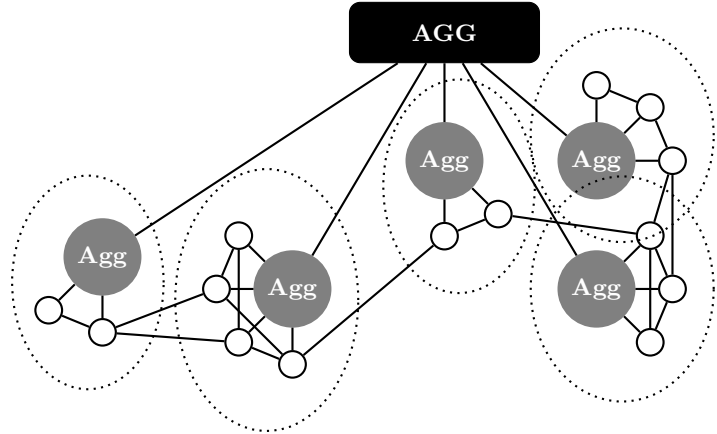 

We can derive a graph $G=(V,\,E)$ from the network structure, where $V$ are all the nodes and the set of edges is created based on the ability 
to establish communication (e.g., transmission range in a sensor network, friendship relation  in a social network).
Namely, the edge $\left\{v,\,v'\right\}\in E$ if and only if  $v$ and $v'$ are \textit{neighbors} and  can communicate via a private channel. 
In our protocol we assume that each node can perform some basic cryptographic operations and has access to a source of randomness.
By $N(v)$ we denote a set of such vertices $v'$ of $G$ that the edge $\left\{v,\,v'\right\}\in E$. 
Security of the protocol described in Section~\ref{ssect:desc}  depends on  the structure of graph $G$, and how many parties the adversary can corrupt. 
Discussion on security of the protocol can be found in Section~\ref{secdisc}.

\paragraph{Adversary.} The adversary may corrupt a subsets of users, local aggregators and the aggregator. 
It can read all messages the controlled parties sent or received.  The aim of the adversary is to learn about individual contributions of uncorrupted users.  

\subsection{Building Blocks}\label{ssect:bild}

Similarly to previous papers, for obtaining high level of data privacy  we combine cryptographic techniques with data perturbation methods typical for research concentrated on differential privacy of databases.  

The first technique we use in our protocol is a homomorphic encryption scheme based on original ElGamal construction enriched by some extra techniques introduced in~\cite{URE}. 
More precisely, encrypted messages can be ``aggregated'' and  re-encrypted. Moreover one can ``add'' an extra encryption layer to a given ciphertext, in such way that the message can be decrypted only using both respective keys. Clearly this operation preserves the homomorphic property. 

Let $\mathbf{G}$ be a group such that the Diffie-Hellman problem is hard. Let $g$ be a generator of $\mathbf{G}$. 
Let $\sk, \sk'$ be a some private keys and $g^{\sk}, g^{\sk'}$ are respective public keys. 

\begin{description}
 \item[Encryption of '$1$'] 
    A pair  $\Enc{\sk}{1}=(g^r,\,g^{r\cdot\sk})$ for a random $r\in \mathbf{G}$ is an encryption of $1$. 
 \item[Re-encryption] 
    Ciphertext representing $1$ can be re-encrypted. Namely, one can get another ciphertext representing one, \textbf{without private key}. 
    Namely having $\Enc{\sk}{1}=(g^r,\,g^{r\cdot\sk})$ one can choose $r'$ and compute $\mbox{Re}(\Enc{\sk}{1})=(g^{r\cdot r'},\,g^{{r\cdot r'}\cdot\sk})$ that represents $1$ as well.

 \item[Adding layer of encryption]
        Having  a ciphertext  $\Enc{\sk}{1}=(g^r,\,g^{r\cdot\sk})$ a party having private key $\sk'$  can ``add encryption layer''
      to a ciphertext obtaining  $$\Enc{\sk+\sk' }{1}=((g^r)^{r'},\,(g^{r\cdot\sk})^{r'}\cdot (g^r)^{r'\sk'})=(g^{r\cdot r'},\,g^{r\cdot r'\cdot(\sk+\sk')}).$$
 \item[Filling the ciphertext] Having $\Enc{\sk}{1}=(g^r,\,g^{r\cdot\sk})$ one can compute  $$\Enc{\sk+\sk'}{C}=(g^r,\,g^{r\cdot\sk}\cdot C).$$
 \item[Partial decryption]
   Having $\Enc{\sk}{C}=(g^{r\cdot r'},\,g^{r\cdot r'\cdot(\sk+\sk')}C)$ and a private key $\sk '$ one can ``remove one layer of encryption'' and obtain 
    $$\Enc{\sk}{C}=\left(g^{r\cdot r'},\,\frac{g^{r\cdot r'\cdot(\sk+\sk')}C}{(g^{r\cdot r'})^{\sk'}}\right)=(g^{r\cdot r'},\,g^{r\cdot r'\cdot\sk}C).$$
 
\end{description}

For the sake of clarity we skip some technical details (i.e., choice of the group size, generators etc.) as well as full security discussion of this encryption scheme. 
Note that these are quite standard techniques used in many papers including \cite{URE,MURE}.

Similarly to previous papers (including \cite{Hubercik,PaniShi}) we utilize the following method: if we know that each user $v \in V$ has a value from an interval of  moderate size $\xi_v \in [0,\Delta]$  then the sum of values of all $\xi_v$'s cannot exceed  $ n\Delta$. Thus  one can find a discreet logarithm for $g^{\sum_{v\in V} \xi_v }$ even if finding 
a discreet logarithm of $g^r$ is not feasible if $r$ is a random element of $\mathbf{G}$. Using Pollard's Rho method this can be completed in average time $O(\sqrt{n\Delta})$.
 
\subsection{Protocol Description}\label{ssect:desc}
During the protocol, we assume that the aggregator $\Agg$ has a private key $\mathrm{sk}$, moreover each of the local aggregators $\lAgg_i$ has its own private key $\sk_i$. We also assume that there is a public parameter $g$, that is a generator of some finite group $\mathbf{G}$, in which Diffie-Hellman problem is hard. By $\Enc{\sk}{c}$ we denote the encryption structure introduced in Section~\ref{ssect:bild}.
Let us assume that each user $v$ has a private value $\xi_v$ from the range $[0,\,\Delta]$.
The final aim is to provide $\Agg$ the sum $\sum_{v\in V}\xi_v$ perturbed in such way that the privacy (expressed in terms of 
differential privacy) of all $v\in V$ is preserved.  Clearly, the privacy of users can be endangered both by reveling the output as well as by collecting 
information about the aggregation process.

 \begin{description}
\item[Setup] \hfill\\
\begin{itemize}
	\item $\Agg$ broadcasts to the local aggregators  $\Enc{\sk}{1}$.
	\item Each of the local aggregators $\lAgg_i$ constructs $\Enc{\sk+\sk_i}{1}$ and publishes it for all users from $V_i$.
	\end{itemize}
The setup phase is performed only once during network's lifetime. Moreover if needed, each $\lAgg_i$ may provide a non-interactive proof that
the operations were performed correctly and honestly~\cite{Greich,BlumNI}.\\

\item[Aggregation] \hfill\\

\begin{description}
\item[Algorithm for node $v$]\hfill\\
\begin{itemize}
	\item For each node $v'\in N(v)$ generate a random value $x^{v}_{v'}\in\mathbf{G}$.
	\item Using a private channel send each value $x^{v}_{v'}$ to the appropriate neighbor~$v'$.
	\item Having received all $x^{v'}_{v}$ from each of the neighbors, select random $r_v$ from $Geom^{\beta}(\alpha)$ and calculate 
                $$c_v=\sum_{v'\in N(v)}{x^{v'}_{v}}-\sum_{v'\in N(v)}{x^{v}_{v'}} +r_v+\xi_v.$$
        \item Compute $\mbox{Re}(\Enc{\sk+\sk_i}{g^{c_v}})$ and send it  to $\lAgg_i$.
\end{itemize}
An example of node's communication is shown in Fig.~\ref{fig:node}.
\item[Algorithm for local aggregator $\lAgg_i$]\hfill\\

\begin{itemize}
 \item Having received  $\Enc{\sk+\sk_i}{g^{c_v}}$ from all nodes from $V_i$,  compute
$$\Enc{\sk}{g^{c_v}}=\left(g^{r_i},\,\frac{g^{r_i(\sk+\sk_i)+c_v}}{g^{r_i\cdot \sk_i}}\right).$$ 
This operations result in obtaining shares  $$\Enc{\sk}{g^{c_{v_1}}}=(g^{r_{v_1}},g^{r_{v_1}\cdot \sk+c_{v_1}}),\,
\dots,\,\Enc{\sk}{g^{c_{v_l}}}=(g^{r_{v_l}},g^{{r_{v_l}}\cdot \sk+c_{v_l}})$$ of all  $l=|V_i|$ users from $|V_i|$. 
\item Compute
$$\Enc{\sk}{g^{c_{v_1}+\dots+c_{v_l}}}=\left(\prod_{i=1}^l g^{r_i},\,\prod_{i=1}^l{g^{r_{i}\sk+c_{v_i}}} \right)=\left(g^{\sum_{i=1}^l r_i},\,g^{(\sum_{i=1}^l r_i)\sk+\sum_{i=1}^l{c_{v_i}}} \right)~. $$
\item Send the value $\Enc{\sk}{g^{c_{v_1}+\dots+c_{v_l}}}$ to the aggregator $\Agg$.
\end{itemize}

\item[Final aggregation] \hfill\\
\begin{itemize}
 \item Having received the aggregated values from each $V_i$, for each of those values  $\Agg$ calculate $y_i= g^{\sum_{v\in V_i}{c_{v}}}$, using its private key $\sk$ for each 
$i=1,\ldots , k$.  Then compute
 
$$y = \prod_{i}^{k} y_i =  \prod_{i} g^{\sum_{v\in V_i}{c_{v_i}}}=g^{\sum_{v\in V}{c_{v_i}}}.$$

\item Then $\Agg$ compute discrete logarithm of $y$ as a final (perturbed) value being a sum of all $\sum_{v\in V}\xi_{v}$.
\end{itemize}

\end{description}
\end{description}
Note that the protocol depends on two security parameters $\beta$ and  $\alpha$. They strongly depend on the topology of the underlying graph.  
We discuss this issue in the next subsection.

\begin{figure}
\centering
\tikzstyle{peers}=[draw,circle,black,bottom color=white,
                  top color= white, text=black,minimum width=10pt]
\tikzstyle{lAgg}=[draw,circle,gray,bottom color=gray, top color=gray,
                       text=white,minimum width=20pt]

 \tikzstyle{myarrow}=[->,>=stealth,font=\scriptsize]
  \tikzstyle{brace}=[decorate,decoration={brace,amplitude=5pt}]
  \begin{tikzpicture}[xscale=0.7,yscale=0.7]
   \begin{scope}

	  \node (t1) [peers] at (2,1.5) {$v_1$};
    \node (t2) [peers] at (2.5,-2) {$v_2$};
    \node (t3) [peers]  at (-0.5,-3) {$v_3$};    
    \node (t4) [peers] at (-3.2,0) {$v_4$};
    \node (t5) [peers] at (-2.85,2.5) {$v_5$};
		\node (v)  [peers] at (0,0) {$v$};
		\node (A)  [lAgg] at (0.75,4) {$\lAgg$};
		\node[ellipse,draw=black,thick,dotted,fit=(A) (v) (t2) (t3) (t1),inner sep=0pt] {};

   \foreach \speer/\peer in {v/t1,v/t2,v/t3, v/t4, v/t5, v/A}
    \path (\speer) edge (\peer);
	\end{scope}

 \draw [myarrow] (v) -- (t1)  node[midway,above]{$x^v_{v_1}$};
 \draw [myarrow] (t1) -- (v)  node[midway,below]{$x_v^{v_1}$};
 \draw [myarrow] (v) -- (t2)  node[midway,above]{$x^v_{v_2}$};
 \draw [myarrow] (t2) -- (v)  node[midway,below]{$x_v^{v_2}$};
 \draw [myarrow] (v) -- (t3)  node[midway,above]{$x^v_{v_3}$};
 \draw [myarrow] (t3) -- (v)  node[midway,below]{$x_v^{v_3}$};
 \draw [myarrow] (v) -- (t4)  node[midway,above]{$x^v_{v_4}$};
 \draw [myarrow] (t4) -- (v)  node[midway,below]{$x_v^{v_4}$};
 \draw [myarrow] (v) -- (t5)  node[midway,above]{$x^v_{v_5}$};
 \draw [myarrow] (t5) -- (v)  node[midway,below]{$x_v^{v_5}$};

	 \draw [myarrow] (v) -- (A)  node[midway,above right]{$Re(\Enc{\sk+\sk_i}{g^{c_v}})$};
	
	  \end{tikzpicture}
  \caption{An example of communication in a single aggregation round from a perspective of node $v$. The dotted line marks the set of nodes assigned to a single local aggregator $\lAgg$. Note that neighbors may may have different local aggregators. }\label{fig:node}
\end{figure}
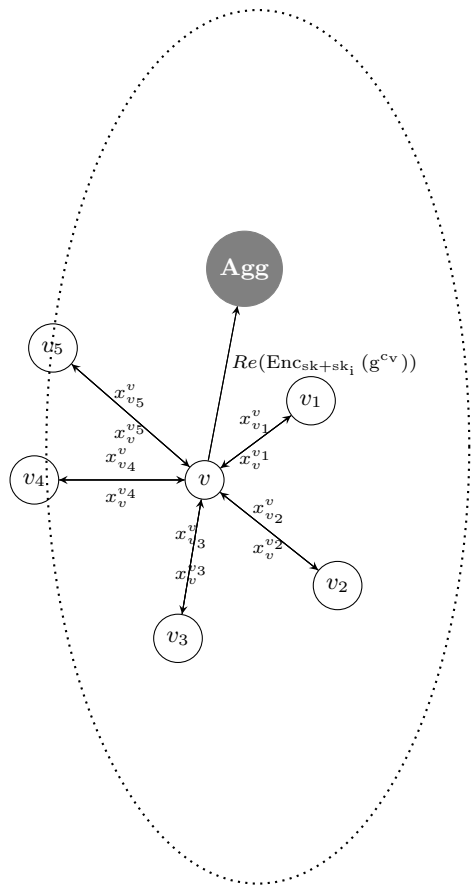

\subsection{Comparison and Analysis}\label{secdisc}
In this section we outline the  analysis of the presented aggregation protocol with respect to correctness, level of privacy provided and error of the result  obtained 
by the aggregator. The analysis is slightly more complicated since the parameters of the protocol strongly  depend on the underlying network. We argue however that 
they offer very good properties for wide classes of networks.

\paragraph{Correctness}

First, let us look at the result obtained by the aggregator $\Agg$ in the last step of the protocol. This is a discrete logarithm of $g^{\sum_{v\in V}{c_{v_i}}}$. 
Let us observe that 

\begin{align*}
\sum_{v\in V}{c_v}&=\sum_{v\in V}{\left(\sum_{v'\in N(v)}{x^{v'}_{v}}-\sum_{v'\in N(v)}{x^{v}_{v'}} +r_v+\xi_v\right)}\\
&=\sum_{v\in V}{\sum_{v'\in N(v)}{x^{v'}_{v}}}- \sum_{v\in V}{\sum_{v'\in N(v)}{x^{v}_{v'}}}+\sum_{v\in V}{\xi_v}+\sum_{v\in V}{r_v}=\sum_{v\in V}{\xi_v}+\sum_{v\in V}{r_v}.
\end{align*}

\noindent
The value  $\sum_{v\in V}{\xi_v}$ is the exact sum of values kept by nodes and sum of all the noises $\sum_{v\in V}{r_v}$. This leads to two conclusions. 
First, the result is correct. Second, retrieving the data using Pollard's Rho method (or even brute force method) is feasible since  
the absolute value of the first sum has to be smaller than $n\Delta$. One can easy see that the sum of added noises 
is of the magnitude $O(n)$ with high probability (as a sum of independent geometric distributions). 


\paragraph{Privacy protection}

We assume that the encryption scheme $\Enc{\sk}{}$ is \textit{semantically secure}. In particular after re-encryption  operation one cannot retrieve any non-trivial 
information about the plaintext without the private key $\sk$ possibly except some negligible probability $\eta\left(\lambda\right)$ with respect to the  key-length $\lambda$ or some other 
security parameters.  In particular, in our protocol, the local aggregator $\Agg_i$ cannot learn the contributions sent to $\Agg_j$ for $i\neq j$ without 
access to keys $\sk_j$ and $\sk$.

For the simplicity of analysis of the privacy protection let us consider the simplest case when $k=1$, i.e. there is only one aggregator.  
In such case we may assume $\Agg_1 = \Agg$. Let $V^{H} \subset V$ be the set of uncompromised users. Note that all neighboring users exchange a purely  
random values $x_{v}^{v'}$'s that finally cancel-out, however as long as they remain unknown to the adversary, they perfectly obfuscate the results sent to the aggregator 
(exactly in the same manner as the one-time pad cipher ).  This can be easily adopted to our protocol to get the following fact.

\begin{fact}\label{fuckt1}
 Let us assume that the adversary can control $\Agg$ and a subset of users \mbox{$V\setminus V^{H}$}. Let $\mathcal{S}$ be a connected component of the subgraph of 
 $\mathcal{G}=(V,E)$ induced by the subset $V^{H}$. Then, the adversary can learn nothing but $\sum_{v\in \mathcal{S}} (\xi_{v} + r_v)$ about the values $\xi_{v}$'s  
 from the execution of PAALEC for any $v\in V^{H}$.
\end{fact}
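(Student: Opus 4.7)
The plan is to combine the semantic security of the encryption scheme with a one-time-pad style argument on the random masks $x^{v}_{v'}$ exchanged along edges inside the honest subgraph.

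First I would observe what the adversary actually sees. Controlling $\Agg$ gives the adversary all private keys needed to peel the outer layer off any ciphertext that reaches $\Agg$, but it does not give access to any $\sk_i$ of an uncorrupted local aggregator, and the inter-user messages $x^{v}_{v'}$ travel over private channels. Under the semantic security of the re-randomizable ElGamal variant from Section~\ref{ssect:bild} (and the hardness of Diffie--Hellman in $\mathbf{G}$), any ciphertext produced by an honest $v\in V^H$ and passed through a re-encryption by an honest local aggregator is, up to a negligible distinguishing advantage $\eta(\lambda)$, computationally indistinguishable from an encryption of any other value of the same length. Hence, for the simplest case $k=1$ considered in the statement, everything the adversary can compute from the ciphertext traffic reduces to what it can compute from the plaintexts that eventually reach $\Agg$, namely the $c_{v}$'s of corrupted users and the aggregated value $\sum_{v\in V}(\xi_v+r_v)$ recovered in the final step.

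Next I would analyse the algebraic structure of the $c_{v}$'s restricted to an honest connected component $\mathcal{S}\subseteq V^H$. Summing the definition of $c_v$ over $v\in\mathcal{S}$ gives
\begin{equation*}
\sum_{v\in\mathcal{S}} c_v \;=\; \sum_{v\in\mathcal{S}}\xi_v + \sum_{v\in\mathcal{S}} r_v + \sum_{v\in\mathcal{S}}\Bigl(\sum_{v'\in N(v)} x^{v'}_{v} - \sum_{v'\in N(v)} x^{v}_{v'}\Bigr).
\end{equation*}
Every edge $\{u,w\}$ with both endpoints in $\mathcal{S}$ contributes $x^{u}_{w}-x^{u}_{w}$ and $x^{w}_{u}-x^{w}_{u}$, which cancel. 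Every edge $\{u,w\}$ with $u\in\mathcal{S}$ and $w\in V\setminus V^H$ leaves a residual pair $x^{w}_{u}-x^{u}_{w}$, but both of these values are known to the adversary since $w$ is corrupted. Subtracting these known boundary residuals from $\sum_{v\in\mathcal{S}} c_v$ leaves exactly $\sum_{v\in\mathcal{S}}(\xi_v+r_v)$, which is therefore derivable from the adversary's view.

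Finally I would argue that nothing finer than this sum can be learned. This is the main obstacle and is where the one-time-pad flavor enters. Fix any partition $\mathcal{S}=\mathcal{S}_1\sqcup\mathcal{S}_2$ with $\mathcal{S}_1,\mathcal{S}_2\neq\emptyset$; since $\mathcal{S}$ is connected, at least one edge $\{u,w\}$ with $u\in\mathcal{S}_1$, $w\in\mathcal{S}_2$ lies inside the honest subgraph. The random value $x^{u}_{w}\in\mathbf{G}$ is uniform, never leaves the private channel between two honest parties, and appears in $c_u$ (negatively) and in $c_w$ (positively) and nowhere else. Conditioning on the adversary's view, $x^{u}_{w}$ remains uniform, hence the pair $(\sum_{v\in\mathcal{S}_1}c_v,\sum_{v\in\mathcal{S}_2}c_v)$ is uniform subject only to its sum being fixed. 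Applying this observation inductively to arbitrary partitions shows that the joint distribution of the individual $\xi_v$ for $v\in\mathcal{S}$ is, given the adversary's view, indistinguishable from any other assignment with the same total $\sum_{v\in\mathcal{S}}(\xi_v+r_v)$. Combining this information-theoretic masking with the computational bound $\eta(\lambda)$ inherited from the encryption scheme yields Fact~\ref{fuckt1}.
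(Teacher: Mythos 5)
Your proposal is correct and follows exactly the argument the paper itself relies on: the paper justifies Fact~\ref{fuckt1} with a one-sentence remark that the pairwise random values $x^{v}_{v'}$ cancel in the sum but act as one-time pads on anything finer, and your writeup is simply a careful elaboration of that (internal edges cancel, boundary masks to corrupted users are known, and connectivity plus a fresh uniform mask on some honest--honest edge makes any sub-sum of a component uniform given the total). Your version is in fact more detailed than what appears in the paper, and the only soft spot --- the informal ``apply inductively to arbitrary partitions'' step, which is cleanest via a spanning-tree argument --- does not affect correctness.
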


\begin{theorem}\label{qq}
Let us assume that PAALEC with parameter $\alpha = \exp(\frac{\epsilon}{\Delta})$ is executed in the network represented by a graph $\mathcal{G}=(V,E)$ 
and $\mathcal{G}'$ is a subgraph of $\mathcal{G}$ induced by the set of uncompromised users $V^{H}$. Moreover we assume that each user $v$ 
contributes a value $\xi_{v} \in [0,\Delta]$.  

If in each connected component $\mathcal{S}$ of $\mathcal{G}'$ there is a user $s$, such that its added noise  $r$ is taken from $Geom(\exp(\frac{\epsilon}{\Delta}))$, 
then PAALEC  preserves computational $(\varepsilon,\,0)$-differential privacy. 
\end{theorem}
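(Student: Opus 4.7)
The plan is to decompose the argument into a cryptographic reduction and a probabilistic privacy step. First, using the semantic security of $\Enc{\sk}{\cdot}$ together with the re-encryption and layering properties imported from \cite{URE,MURE}, I would argue that the adversary's view in PAALEC is computationally indistinguishable from the tuple of per-component noisy sums $\bigl(\sum_{v\in\mathcal{S}}(\xi_v+r_v)\bigr)_{\mathcal{S}}$, indexed by the connected components $\mathcal{S}$ of $\mathcal{G}'$. Second, invoking Fact~\ref{fuckt2}, I would establish the multiplicative $e^{\varepsilon}$ bound on adversary-success probabilities by restricting attention to the unique component containing the differing index $i$.

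For the cryptographic step, the idea is to build a simulator $\mathsf{Sim}$ that, given only the per-component sums and the adversary's static view (public parameters, topology, keys of corrupted parties), produces a transcript indistinguishable from the real one up to a negligible advantage $\eta(\lambda)$. The pads $x^{v}_{v'}$ on edges with at least one corrupted endpoint are delivered to the adversary exactly as in the real protocol, while pads internal to a single honest component telescope out in the aggregation and can therefore be sampled uniformly by the simulator. Each honest-user ciphertext $\mbox{Re}(\Enc{\sk+\sk_i}{g^{c_v}})$ is replaced, in a standard hybrid over honest users, by an encryption of an arbitrary group element subject only to the constraint that each local aggregator's published aggregate decrypts to the simulated per-component sum. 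Fact~\ref{fuckt1} already isolates the residual leakage as exactly $\sum_{v\in\mathcal{S}}(\xi_v+r_v)$ per component, which is what certifies that this reduction is tight.

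For the privacy step, fix neighboring databases $x,y$ that differ only at coordinate $i$, and restrict to the conditional distribution $\mathcal{C}_i$ in which $i\in V^H$. User $i$ belongs to a unique connected component $\mathcal{S}^{\star}$ of $\mathcal{G}'$; for every other component, the distribution of the leakage is identical on $x$ and $y$, so one may focus on $\mathcal{S}^{\star}$. Writing the leakage there as $S = u + r_s + R$, with $u=\sum_{v\in\mathcal{S}^{\star}}\xi_v$, $r_s \sim Geom(\exp(\varepsilon/\Delta))$ the guaranteed geometric noise, and $R=\sum_{v\in\mathcal{S}^{\star}\setminus\{s\}} r_v$ independent of $r_s$, the $y$-side admits the same form with $u$ replaced by $u'$ where $|u-u'|\le\Delta$. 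Conditioning on $R=\rho$ and applying Fact~\ref{fuckt2} yields $\Pr[u+r_s+\rho=k] \le e^{\varepsilon}\Pr[u'+r_s+\rho=k]$ for every integer $k$; averaging over $\rho$ and summing over any event in the output of $\mathcal{A}$ gives the $e^{\varepsilon}$ factor. Combining with the $\eta(\lambda)$ from the cryptographic step yields precisely computational $(\varepsilon,0)$-differential privacy.

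The step I expect to be most delicate is the cryptographic reduction: the simulator must simultaneously (i) replace honest ciphertexts by semantically-secure surrogates in a hybrid sequence, and (ii) ensure the decrypted per-component aggregates match the prescribed sums, all while the adversary controls $\Agg$ and a subset of users and can correlate intercepted pads with ciphertexts across component boundaries. Once leakage is collapsed to the per-component sums, the privacy amplification becomes essentially post-processing on the already-private sum within $\mathcal{S}^{\star}$ and is routine given Fact~\ref{fuckt2}.
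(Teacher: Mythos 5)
Your proof follows essentially the same route as the paper's: reduce the adversary's view to the per-component noisy sums (the paper's Fact~\ref{fuckt1}, which it asserts without the simulator/hybrid argument you sketch) and then apply Fact~\ref{fuckt2} to the guaranteed geometric noise within the single component containing the differing coordinate, absorbing cipher weaknesses into the negligible $\eta(\lambda)$. Your write-up is in fact more careful on one point the paper glosses over --- you condition on the remaining noises $R=\rho$ before invoking Fact~\ref{fuckt2} and average afterwards, whereas the paper applies the fact directly to $\Xi+r$ without mentioning the other users' $r_v$'s --- but the decomposition and the key lemmas are identical.
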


\begin{proof}
 Let $\Xi = \sum_{s\in S} \xi_s$ and let $\Xi'$ be the same sum with changed a single value $\xi_s$. 
By the assumption about the range of the aggregated values we get $|\Xi'- \Xi|\leq \Delta$.  
 Let $r$ be a random variable taken from the symmetric geometric distribution  $Geom(\exp(\frac{\epsilon}{\Delta}))$. From Fact \ref{fuckt2}  we know that 
 $Pr[\Xi+r=k]$ may differ from  $Pr[\Xi'+r=k]$ by at most a multiplicative factor $\exp(\epsilon)$.  However, from Fact \ref{fuckt1} we know that the adversary may 
 learn nothing more than the sum of all values from the component $\mathcal{S}$. To complete the proof it is enough to recall that we assumed that probability of 
 gaining some other knowledge if  weak parameters of the cipher are chosen is at most negligible function  $\eta\left(\lambda\right)$.
\end{proof}
From this theorem follows next corollary.
\begin{corollary}\label{qq2}
If PAALEC is executed on a graph such that a subgraph  induced by the set of uncompromised users $V^{H}$ is connected
and with  probability at least $1-\delta$  at least one uncompromised users adds its value $r$ from $Geom(\exp(\frac{\epsilon}{\Delta}))$ 
then PAALEC  computationally preserves $(\varepsilon,\,\delta)$-differential privacy. 
 \end{corollary}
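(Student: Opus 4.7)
The plan is to reduce Corollary \ref{qq2} directly to Theorem \ref{qq} by conditioning on the event that at least one uncompromised user actually contributes a geometric noise from $Geom(\exp(\epsilon/\Delta))$ (recall that in PAALEC the noise is drawn from the diluted distribution $Geom^{\beta}(\alpha)$, so with probability $1-\beta$ the noise is simply $0$). Call this event $E$; by hypothesis $\Pr[E]\geq 1-\delta$.

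First I would observe that the connectedness assumption on the subgraph induced by $V^{H}$ means that $V^{H}$ forms a single connected component $\mathcal{S}$ in the language of Fact~\ref{fuckt1}. Hence, by that fact, regardless of which subset of users is compromised, the adversary's view depends on the uncompromised contributions only through the single aggregated value $\sum_{v\in V^{H}}(\xi_v + r_v)$, up to a negligible additive term $\eta(\lambda)$ coming from the semantic security of the encryption scheme.

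Next I would split the analysis on the event $E$. Conditioned on $E$, there is at least one $s\in V^{H}$ whose additive noise $r_s$ is drawn from the full (undiluted) $Geom(\exp(\epsilon/\Delta))$; since the other noises $r_v$ on $V^{H}$ are independent, we may absorb them into a deterministic shift and apply Fact~\ref{fuckt2} to $r_s$ alone, exactly as in the proof of Theorem~\ref{qq}, to obtain the multiplicative $\exp(\epsilon)$ bound between $\Pr[\mathcal{A}(\Pi(\lambda,x),C)=b\mid E]$ and the corresponding probability for $y$. On the complementary event $E^{c}$, which has probability at most $\delta$, I make no privacy claim and simply bound the adversary's winning probability trivially by $1$; this contributes the additive $\delta$ term in the definition of computational $(\varepsilon,\delta)$-differential privacy. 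Combining the two cases via the law of total probability and folding in the negligible $\eta(\lambda)$ from semantic security yields the desired inequality.

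The only subtle point, which I would be careful to handle, is the independence of $E$ from the comparison between the two neighboring inputs $x$ and $y$: since $E$ depends only on the coins used to sample the $\beta$-dilution indicators, and not on the private values $\xi_v$, conditioning on $E$ preserves the statistical relation between the two distributions. Thus the bound from Theorem~\ref{qq} applies unchanged on $E$, and the rest is bookkeeping of the $\delta$ and $\eta(\lambda)$ slack.
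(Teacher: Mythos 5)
Your argument is correct and is exactly the intended one: the paper gives no explicit proof of Corollary~\ref{qq2}, asserting only that it ``follows from'' Theorem~\ref{qq}, and your conditioning on the event $E$ (with the trivial bound contributing the additive $\delta$ on $E^{c}$, and the observation that $E$ depends only on the dilution coins and not on the inputs) is the standard fleshing-out of that assertion. Nothing to object to.
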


Translating into real terms  Theorem~\ref{qq} with Corollary~\ref{qq2} say if the connections between honest users are enough dense and we can somehow guarantee that
at least one honest node adds the noise,  the system is secure. The core of the problem is judge if a real-world networks are dense enough and what parameters 
of adding noise are sufficient. This problem is discussed in the next paragraph.

\paragraph{Accuracy}

The level of accuracy and security in this protocol strongly depends on the graph topology and chosen security parameters.
We will consider a random graph, where each of possible edge is independently added with probability $p$.
Moreover the adversary controls up to $n-m$ randomly chosen  users.

\begin{theorem}{\label{ssect:thmACC}}
Let us consider a random network with $n$ nodes. Each of possible  ${n \choose 2}$
connections (edges) is independently added to the network with probability
$p \geq \frac{8\log n}{n}$. Let $\mathcal{S}$  be a subgraph induced by a  subset
of at least  $m \geq n/2$ randomly chosen nodes. Then $\mathcal{S}$ is connected with
probability at least $1 - 1/n$.
\end{theorem}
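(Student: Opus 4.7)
The plan is to reduce the statement to a classical Erdős--Rényi connectivity argument and then push through the union bound carefully, making full use of the assumption $p \geq 8\log n / n$ together with $m \geq n/2$.

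First I would observe that the induced subgraph $\mathcal{S}$ is itself distributed as an Erdős--Rényi random graph $G(m',p)$ for some $m'\geq m \geq n/2$: since every potential edge of $\mathcal{G}$ is present independently with probability $p$, conditioning on any choice of vertex set (even random) leaves the distribution of the induced edges unchanged. So one may forget about the random choice of vertices and work directly with $G(m',p)$, noting that connectivity becomes strictly harder as $m'$ decreases; hence it suffices to handle the worst case $m'=\lceil n/2\rceil$.

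Next I would bound the probability of disconnection via the standard ``small side'' union bound. A graph on $m'$ vertices is disconnected iff there exists a nonempty vertex set $S$ with $1\leq |S|\leq \lfloor m'/2\rfloor$ that has no edges to its complement. Therefore
\[
\Pr[\mathcal{S}\text{ disconnected}] \;\leq\; \sum_{k=1}^{\lfloor m'/2\rfloor} \binom{m'}{k}(1-p)^{k(m'-k)}.
\]
Using $\binom{m'}{k}\leq (em'/k)^k$, $1-p\leq e^{-p}$, and $m'-k\geq m'/2$ for $k\leq m'/2$, each term is at most $\bigl((em'/k)\,e^{-pm'/2}\bigr)^{k}$. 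Plugging in $p\geq 8\log n/n$ and $m'\geq n/2$ gives $pm'/2 \geq 2\log n$, hence $e^{-pm'/2}\leq n^{-2}$, and with $em'/k \leq en$ each term is dominated by $(e/(kn))^{k}$.

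Finally I would sum the resulting geometric-like series. The $k=1$ term, treated directly, gives $m'(1-p)^{m'-1} \leq n\cdot e^{-4\log n\,(1-o(1))} = O(n^{-3})$, and for $k\geq 2$ the bound $(e/(kn))^{k}$ contributes $O(n^{-2})$ in total. Adding these up is comfortably below $1/n$ for all $n$ in the range of interest, which yields the theorem. The main obstacle is purely bookkeeping: making sure the constants from $\binom{m'}{k}\leq (em'/k)^{k}$ and from $m-k\geq m/2$ are sharp enough, since a naive bound yields $O(1/n)$ with a constant slightly bigger than one and one must refine the $k=1$ term (where isolated vertices dominate) to pin the tail below $1/n$ exactly.
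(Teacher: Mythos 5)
Your proposal is correct and follows essentially the same route as the paper: a union bound over potential cut sets of size $k\leq m/2$, the estimate $m-k\geq m/2$, and summation of the resulting geometric-type series, with $p\geq 8\log n/n$ and $m\geq n/2$ supplying the needed $n^{-2}$ decay per vertex in the cut. Your refinements (using $\binom{m}{k}\leq (em/k)^k$ instead of $m^k$ and splitting off the $k=1$ term) are only bookkeeping variations on the paper's argument.
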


Note that the  presented model boils down to the classic Erd\H{o}s-R\'{e}nyi model \cite{Luczak}. 
For the sake of completeness and to get explicit constants we present the proof in the appendix of the full version of our paper 
\cite{arXiv}.

From Theorem \ref{ssect:thmACC} we learn that a ``typical'' network of $n$ nodes with random connections 
such that the average number of neighbors is $8\log n = \Theta(\log n)$  is dense enough  even if the adversary is able to compromise
as much as $n/2$ nodes. 

If we have guaranteed at least $n/2$ honest (uncompromised and working) nodes one may note that the probability that none of them 
adds the noise is at least $(1-\beta)^{n/2}$. To have $(1-\beta)^{n/2} \leq \delta$ one needs to have $\beta$ such that 
$\log (1-\beta)  \leq \frac{2\log \delta}{n}$. Since  $\log (1+x) \leq x $ for $x>-1$ it is enough to use
$\beta \geq  \frac{2\log (1/\delta)}{n}$. Clearly the expected error cannot exceed  $2 \sqrt{\log (1/\delta)}$ for 
$\beta =  \frac{2\log (1/\delta)}{n}$. Using standard methods one can also show that the expected error is concentrated.

\paragraph{Remarks and  Extensions} 

We proved that the proposed protocol guarantees  a very good accuracy even facing a massive failures and compromising of nodes.
Half of nodes may failed or cooperate with the adversary (In fact this result can be generalized to any constant fraction of users).
The analysis and the model can be relaxed/extended in many directions. One can instantly observe that the analysis can be 
extended for smaller $\delta$ for the price of moderate increasing of the expected noise. 
Note that the value of  $\delta$ set to a celebrated magic constant $0.05$ seems to be definitely too big 
for practice. Indeed, this implies that one out of each $20$ may loos its privacy. 

We believe that this approach can be useful for other graphs-including those representing social networks. Note that if a graph 
guarantees a specific level of privacy then more dense graph (with some added edges) offers at least the same level of privacy. Thus 
it is enough if each users adds something like $\Theta(\log n)$ ``randomly'' chosen neighbors to protect the privacy in any network.

Note that our protocol is not immune against an adversarial nodes that sends incoherent random data. 
To the best of our knowledge all protocols of this type (including~\cite{Hubercik,PaniShi})  are prone to so called \textit{contaminating} attacks. 
To mitigate this problem as in other cases one may apply orthogonal methods presented in~\cite{sia}.

 \section{Previous and Related Work}\label{sect:prev} 
Data aggregation in distributed networks has been thoroughly studied due to practical importance of such protocols. Measuring the target environment,  aggregating data and rising alarm are arguably three most important functionalities of distributed sensing networks, and with the increased number of personal mobile devices, the aggregation becomes of greatest interest among the three. Exemplary protocols that do not address security nor privacy may be found in~\cite{unsecureAgg,tag}, with the latter being often presented as a model aggregation algorithm. \\
There are several settings considering data aggregation. They differ in both, the abilities and constraints of the nodes performing the aggregation, as well as the issues that the algorithm addresses. Some of the adversities that may be addressed include data confidentiality (i.e., protecting the data from disclosure), privacy of the nodes (inability to learn exact values of each node), node failure and spontaneous node joining the network as well as data poisoning (i.e., injecting malicious data by the adversary that allows them to significantly influence the outcome of the algorithm or learning more information about the execution that they would not gain when following the protocol honestly). \\
Our paper follows the model considered in~\cite{Hubercik}, where the nodes have constrained abilities and their energy pool is limited. Authors present a privacy preserving aggregation protocol that assumes malicious aggregator, moreover they claim tolerance for failures and joins, hence addressing majority of the issues. Similar problems that focus on narrower range of properties have been also studied in~\cite{PaniShi,Rastogi}. In~\cite{PDA,PDA2} authors present some aggregation protocols that preserve privacy, however they do not consider dynamic changes inside of the network. The latter also considers data poisoning attacks, however the authors do not provide rigid proofs. A different approach was presented in~\cite{jajodia,6171193}, where the authors present a framework for some aggregation functions and consider the confidentiality of the result, however leaving nodes' privacy out of scope of their papers. On the other hand, there is bulk of research that focuses on fault tolerance that leaves 
privacy and security issues either out of scope or just mentioned, not keeping it as a priority. Examples of such work may be found in~\cite{Feng2011451,Jhumka20141789,4147120}.
In~\cite{HubercikBound} the authors present an asymptotic lower bound on the error of the aggregation that preserves privacy, showing that in order to reduce the errors, one has to resign from perfect privacy and focus rather on computational variant of the privacy preservation.\\
An example of work on secure data aggregation in stronger models may be found in~\cite{jawurek,rottondi2013}, where the authors consider data aggregation in a smart grid. 
Another fruitful branch of the research on data aggregation considers data aggregation in vehicular ad~hoc networks (VANET). The research in this field is motivated by the increasing number of ,,smart-cars'' with internal computational unit. One of the first works addressing this issue was~\cite{SOTIS,Nadeem,sot2}. A practical scenario for data aggregation in VANET has been presented in~\cite{parking}. The security issue in VANET data-aggregation has been mentioned in~\cite{riv,sas}. A survey of the known protocols has been performed in~\cite{Mohanty2012922}.
One may note that retrieving  encrypted or blinded data by one entity, that requires cooperation of others is similar to cryptographic secret-sharing. Some of the most important work on secret sharing may be found in~\cite{Benaloh,sssurvey}, however in our paper we draw from the Universal Re-encryption method presented in~\cite{URE}.

\section{Conclusions}\label{sect:conclusion}
In our paper we provided a precise analysis of accuracy of the data aggregation protocol presented in~\cite{Hubercik}. We have shown that in many cases its accuracy may not be sufficient even if the number of faults is moderate. 
We constructed another fault tolerant, privacy preserving aggregation protocol that offers much better precision. In order to obtain this, we allowed a moderate communication between the nodes. This assumption deviates from the classic model. 

We believe that our approach and security model is justified in many real-life scenarios, however much research is left to be done in the field. First of all, our protocol as well as all other similar protocols we are aware of, is not immune against so called data poisoning attack. Another problem is finding solution for statistics other than sum. Authors of aggregating schemes usually limit the scope of their work to sum, product and average of the values of all nodes in the network. In many cases we need however other statistics, e.g. minimum or the median.  We suppose that finding more general statistics with guaranteed privacy of individuals is possible using methods explored in e-voting protocols. They however are very demanding in terms of required resources. 
  From the theoretical point of view the important question is about the possible trade-offs between privacy protection, volume of communication and possible accuracy of the results of aggregation.

%
%
%
\bibliographystyle{splncs-url}
\bibliography{bibliography} 
 
\section*{Appendix}\label{sect:App}
\renewcommand{\thesubsection}{\Alph{subsection}}

\subsection{Proof of Theorem~\ref*{ssect:thmEY}.}

\begin{proof}
Consider Binary Protocol described in~\ref{BP}. We aim to give a precise formula for the expected value of the number of noises added in this protocol. For simplicity we assume that $n$ is the power of $2$. We also assume that $\kappa$ leaves have failed, and they are uniformly chosen from all $n$ leaves. We will use random variables $X_i$ to denote the number of segments (on $i$-th level of the tree) corresponding to subset of users with no failures. We will also use random variable $X^*_i$ to denote the number of aggregating nodes on the $i$-th level of the tree. Let us begin with stating and proving the following
\begin{lemma}{\label{ssect:lemmaEXi*}}
Consider Binary Protocol with fixed $\kappa$ and $n$. We call a node an \textit{aggregating node}, if it is used by the aggregator to obtain a sum of data from some subset of users. We have the following formula for $i \geqslant 1$
$$
EX_i^* = EX_i - 2EX_{i-1} = 2^i\cdot \left(\frac{\binom{n-\frac{n}{2^{i}}}{\kappa}}{\binom{n}{\kappa}} - \frac{\binom{n-\frac{n}{2^{i-1}}}{\kappa}}{\binom{n}{\kappa}} \right).
$$
\end{lemma}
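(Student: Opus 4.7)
The plan is to prove the identity $X_i^* = X_i - 2 X_{i-1}$ by a combinatorial characterization of which nodes the aggregator actually uses, and then to evaluate $E X_i$ directly by linearity of expectation using the hypergeometric probability that a fixed block of leaves is failure-free. I use the convention (consistent with the statement) that level $i$ of the virtual binary tree contains $2^i$ internal nodes, each covering a disjoint block of $n/2^i$ consecutive leaves, so that level $0$ is the root and level $\log_2 n$ is the leaf layer; since $n$ is a power of two, the tree is full. Call a level-$i$ node \emph{clean} if none of its $n/2^i$ leaves has failed, so that $X_i$ counts the clean level-$i$ nodes.

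The first step is to recognise which clean nodes are actually used by $\Agg$. Since the aggregator cannot decrypt any block containing a failed user, the only usable blocks are the clean ones; and since lifting two sibling clean blocks to their (necessarily clean) parent strictly reduces the number of uncancelled noise terms added to the final output, the protocol always employs the coarsest clean covering. Hence a level-$i$ node is aggregating exactly when it is clean \emph{and} its parent at level $i-1$ is not. Because the two children of any clean level-$(i-1)$ node are automatically clean (their leaf-sets are subsets of the parent's), exactly $2 X_{i-1}$ of the $X_i$ clean level-$i$ nodes have a clean parent and are therefore discarded. This gives $X_i^* = X_i - 2 X_{i-1}$, and, taking expectations, $E X_i^* = E X_i - 2 E X_{i-1}$.

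It remains to compute $E X_i$. Fix one level-$i$ node $u$; since the $\kappa$ failures form a uniformly random $\kappa$-subset of the $n$ leaves, $u$ is clean iff all $\kappa$ failures lie among the $n - n/2^i$ leaves outside $u$, an event of probability $\binom{n - n/2^i}{\kappa}\!\big/\!\binom{n}{\kappa}$. Summing over the $2^i$ nodes on level $i$,
\[
E X_i = 2^i \cdot \frac{\binom{n - n/2^i}{\kappa}}{\binom{n}{\kappa}}, \qquad 2\, E X_{i-1} = 2^i \cdot \frac{\binom{n - n/2^{i-1}}{\kappa}}{\binom{n}{\kappa}},
\]
and subtraction yields the stated closed form. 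The only genuinely non-mechanical ingredient is the structural identity $X_i^* = X_i - 2 X_{i-1}$, whose justification hinges on the claim that the aggregator's covering really is the greedy partition into maximal clean blocks; once this is granted (as noted, any non-greedy covering produces strictly more residual noise in the output), the remainder of the argument reduces to straightforward enumeration.
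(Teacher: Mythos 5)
Your proposal is correct and follows essentially the same route as the paper: the structural identity $X_i^* = X_i - 2X_{i-1}$ from the clean/aggregating-node characterization, followed by linearity of expectation with the hypergeometric probability $\binom{n-n/2^i}{\kappa}/\binom{n}{\kappa}$ that a fixed level-$i$ block is failure-free. The only difference is that you spend an extra sentence justifying why the aggregator uses the coarsest clean covering, which the paper takes as given from the protocol description.
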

\begin{proof}
First of all, we will call a segment in the Binary Protocol tree \textit{clean} if and only if there are no fails in this segment. Each node in the tree corresponds to a specific segment, according to Binary Protocol rules. See that on a certain tree level, all nodes correspond to segments of the same size, noted here by $|B_i|$. Throughout this reasoning we will call the 'root level' 0, children of the root are on level $1$ and so on, up to level $\log_2(n)$ which is the 'leaves level'.

\par

Data of each user is aggregated in exactly one node, which belongs to some $i$th level and corresponds to a specific segment. This user generates a geometric noise with probability $\beta_i$, where:
$$
\beta_i = \min\left(\frac{1}{|B_i|}\ln\left(\frac{\log_2(n)+1}{\delta}\right), 1\right).
$$
We want to know an expected value of the number of noises generated throughout the whole protocol.

\par
To do this, first we denote the number of 'clean' segments of size $|B_i|$ (corresponding to nodes on $i$th level of the tree) by a random variable $X_i$. See that $X_i \in \{0, 1, \ldots, 2^i\}$. Furthermore, we see that:
$$
X_i = \sum_{j=1}^{2^i}X_{i,j},
$$
where 
$$
X_{i,j} = \begin{cases}
1, \qquad  \text{if segment j on level i has no fails},\\
0, \qquad \text{otherwise}.
\end{cases}
$$
This, and the fact that $EX_{i,j} = EX_{i,k}$ for every $j, k \in {0, \ldots, 2^i}$, allows us to use linearity of expectation to calculate $EX_i$:
\begin{equation}\label{ssect:EXi}
EX_i = E\sum_{j=1}^{2^i}X_{i,j} = \sum_{j=1}^{2^i}EX_{i,j} = 2^i EX_{i,1} = 2^i \cdot P(X_{i,1} = 1).
\end{equation}
Now see that
$$
P(X_{i,1} = 1) = \frac{\binom{n-|B_i|}{\kappa}}{\binom{n}{\kappa}},
$$
and also $|B_i| = \frac{n}{2^i}$, thus plugging these to (\ref{ssect:EXi}) we get
\begin{equation}\label{ssect:EXiFormula}
EX_i = 2^i \cdot \frac{\binom{n-\frac{n}{2^i}}{\kappa}}{\binom{n}{\kappa}}.
\end{equation}

Now let us consider the number of segments which really aggregate the data. See that if a node is an aggregating one, that means that it corresponds to a clean segment, but its parent does not correspond to a clean segment. We denote the number of aggregating nodes on $i$th level by $X_i^*$, we can also see that $X_i^* = X_i - 2X_{i-1}$, where $i \in \{1, \ldots, \log_2(n)\}$. \\
There are $X_i$ clean nodes on $i$th level but we have to subtract all the clean nodes from higher level of the tree multiplicated by $2$, because each of these clean nodes on a higher level is parent to two nodes on $i$th level, which are therefore not an aggregating nodes, because their parent is clean. That gives us
$$
EX_i^* = EX_i - 2EX_{i-1} = 2^i\cdot \left(\frac{\binom{n-\frac{n}{2^{i}}}{\kappa}}{\binom{n}{\kappa}} - \frac{\binom{n-\frac{n}{2^{i-1}}}{\kappa}}{\binom{n}{\kappa}} \right),
$$
which completes the proof of this lemma. \qed
\end{proof}

Lemma \ref{ssect:lemmaEXi*} gives us an explicit formula for $EX_i^*$. Now, when we have a formula for the expected value of the number of aggregating nodes on each level, we can proceed to calculating the expected value of the number of geometric noises generated during the Binary Protocol.
\par
Let $Y_i$ be a random variable which denotes the number of noises generated on $i$th level of the tree. On $i$th level we aggregate $X_i^*$ segments, each of these segments have $2^{\log_2(n)-i}$ users and each of these users generates geometric noise with probability $\beta_i$. Therefore we have  $Y_i \sim Bin\left(2^{\log_2(n)-i} \cdot X_i^*, \beta_i\right)$, where $Bin(n,p)$ denotes binomial distribution. After observing this, we can see that
$$
EY_i = EX_i^* \cdot 2^{\log_2(n)-i} \cdot \beta_i. 
$$
Every user is aggregated only on one level, so if we take a sum over all levels of the tree, we will get all the noises generated during the Binary Protocol. Let $Y$ be a random variable that denotes the number of noises generated. We have
$$
Y = \sum_{i=0}^{\log_2(n)}Y_i,
$$
and we can also safely assume that if $\kappa > 0$, then $Y_0 = 0$, because if at least one user has failed, then we cannot possibly aggregate all users in the root of the tree. Furthermore, using linearity of expectation and well known expected value for Binomial distribution we have
$$
EY = \sum_{i=1}^{\log_2(n)} EX_i^* \cdot 2^{\log_2(n)-i} \cdot \beta_i = \sum_{i=1}^{\log_2(n)} (EX_i - 2EX_{i-1}) \cdot 2^{\log_2(n)-i} \cdot \beta_i.
$$
After simple algebraic manipulations we can get
\begin{align*}
EY &= \sum_{i=1}^{\log_2(n)} EX_i \cdot 2^{\log_2(n)-i} \cdot \beta_i - \sum_{i=1}^{\log_2(n)} 2EX_{i-1} \cdot 2^{\log_2(n)-i} \cdot \beta_i = \\
&= \sum_{i=1}^{\log_2(n)} EX_i \cdot 2^{\log_2(n)-i} \cdot \beta_i - \sum_{i=0}^{\log_2(n)-1} EX_{i} \cdot 2^{\log_2(n)-i} \cdot \beta_{i+1} = \\
&= EX_{\log_2(n)} \cdot \beta_{\log_2(n)} - n \beta_1 EX_0 + \sum_{i=1}^{\log_2(n)-1} EX_i \cdot 2^{\log_2(n)-i} \cdot \left(\beta_i-\beta_{i+1}\right).
\end{align*}
Also, as $\kappa > 0$, we have $X_0 = 0$ with probability $1$. These facts yield the following result
\begin{align*}
EY &= EX_{\log_2(n)} + \sum_{i=1}^{\log_2(n)-1} EX_i \cdot 2^{\log_2(n)-i} \cdot \left(\beta_i - \beta_{i+1}\right) = \\
&= n \cdot \frac{\binom{n-1}{\kappa}}{\binom{n}{\kappa}} + \sum_{i=1}^{\log_2(n)-1} 2^i \cdot \frac{\binom{n-\frac{n}{2^i}}{\kappa}}{\binom{n}{\kappa}} \cdot 2^{\log_2(n)-i} \cdot  \left(\beta_i - \beta_{i+1}\right) = \\
&= n-\kappa + n\cdot \sum_{i=1}^{\log_2(n)-1}\left(\frac{\binom{n-\frac{n}{2^i}}{\kappa}}{\binom{n}{\kappa}} \cdot  \left(\beta_i - \beta_{i+1}\right)\right).
\end{align*}
This gives us a formula for calculating $EY$ and completes the proof of this theorem. \qed
\end{proof}

\subsection{Proof of Lemma \ref*{ssect:lemmaEY}.}
\begin{proof}
We fix $\delta = 0.05$. First observe that for $2^4 \leqslant n \leqslant 2^{21}$ we have 
\begin{equation*}
\beta_{\log_2(n)}=\beta_{\log_2(n)-1}=\beta_{\log_2(n)-2} = 1,
\end{equation*}
as for these levels we have $\frac{1}{|B_i|}\cdot \ln(\log(n)+1) > 1$. This means that users aggregated in segments of length $1$ and $2$ generate noise with probability $1$. Furthermore, for $i \leqslant (\log_2(n)-3)$ we have $\beta_{i} < 1$.  Also, for $i \leqslant (\log_2(n)-4)$ we have
\begin{equation*}
\frac{\beta_{i+1}}{\beta_i} = \frac{|B_i|}{|B_{i+1}|} = 2.
\end{equation*}
Another observation is that we can get an upper bound for $\frac{\binom{n-\frac{n}{2^i}}{\kappa}}{\binom{n}{\kappa}}$ in a following way
\begin{align*}
\frac{\binom{n-\frac{n}{2^i}}{\kappa}}{\binom{n}{\kappa}} &= \frac{(n-\frac{n}{2^i})! \cdot (n-\kappa)!}{(n-\frac{n}{2^i}-\kappa)! \cdot n!} = \\
&= \frac{(n\cdot \frac{2^i-1}{2^i})\cdot (n\cdot \frac{2^i-1}{2^i}-1) \cdot \ldots \cdot (n\cdot \frac{2^i-1}{2^i}-\kappa +1)}{n\cdot (n-1) \cdot \ldots \cdot (n-\kappa+1)} = \\
&= \left(\frac{2^i-1}{2^i}\right)^{\kappa} \cdot \frac{n\cdot (n\cdot-\frac{2^i}{2^i-1}) \cdot \ldots \cdot (n-(\kappa -1) \cdot \frac{2^i}{2^i-1})}{n\cdot (n-1) \cdot \ldots \cdot (n-\kappa+1)} \leqslant \\ 
&\leqslant \left(\frac{2^i-1}{2^i}\right)^{\kappa} = \left(1-\frac{1}{2^i}\right)^{\kappa} = \left(\left(1-\frac{1}{2^i}\right)^{2^i}\right)^{\frac{\kappa}{2^i}} \leqslant e^{-\frac{\kappa}{2^i}},  
\end{align*}
where the last inequality comes from the fact that $(1-x) \leqslant e^{-x}$. We can use all these observations to obtain a lower bound. Let $\beta^* = \ln\left(\frac{\log_2(n)+1}{\delta}\right)$. Then we have
\begin{align*}
EY &= n-\kappa + n\cdot \sum_{i=1}^{\log_2(n)-1}\left(\frac{\binom{n-\frac{n}{2^i}}{\kappa}}{\binom{n}{\kappa}} \cdot  \left(\beta_i - \beta_{i+1}\right)\right) = \\
&= n-\kappa -   n\cdot \left(\sum_{i=1}^{\log_2(n)-4}\left(\frac{\binom{n-\frac{n}{2^i}}{\kappa}}{\binom{n}{\kappa}} \cdot  \beta_i\right) + \frac{\binom{n-8}{\kappa}}{\binom{n}{\kappa}} \cdot  \left(1-\beta_{\log_2(n)-3} \right) \right) \geqslant \\ 
&\geqslant n-\kappa -   n\cdot \left(\sum_{i=1}^{\log_2(n)-4}\left(e^{-\frac{\kappa}{2^i}} \cdot  \beta_i\right) + e^{\frac{8\kappa}{n}} \cdot  \left(1-\beta_{\log_2(n)-3} \right) \right) \geqslant \\
&\geqslant n-\kappa -   n\cdot \left(\sum_{i=1}^{\log_2(n)-4}\left(e^{-\frac{\kappa}{2^{\log_2(n)-4}}} \cdot  \beta_i\right) + e^{\frac{8\kappa}{n}} \cdot  \left(1-\beta_{\log_2(n)-3} \right) \right) = \\
&= n-\kappa -   n\cdot \left( e^{-\frac{16 \kappa}{n}} \cdot \frac{\beta^*}{n} \cdot \sum_{i=1}^{\log_2(n)-4}\left(2^i\right) + e^{\frac{8\kappa}{n}} \cdot  \left(1-\beta_{\log_2(n)-3} \right) \right) = \\
&= n-\kappa -   n\cdot \left( e^{-\frac{16 \kappa}{n}} \cdot \frac{\beta^*}{n} \cdot \left(\frac{n}{8} - 2\right) + e^{\frac{8\kappa}{n}} \cdot  \left(1-\frac{\beta^*}{8} \right) \right) \geqslant \\
&\geqslant n-\kappa -   n\cdot \left( e^{-\frac{16\kappa}{n}} \cdot \frac{\beta^*}{8} + e^{\frac{8\kappa}{n}} \cdot  \left(1-\frac{\beta^*}{8} \right) \right) = \\
&= n-\kappa - n\cdot \left( e^{-\frac{8\kappa}{n}} + \frac{\beta^*}{8} \cdot \left( e^{-\frac{16\kappa}{n}} - e^{-\frac{8\kappa}{n}} \right)\right).
\end{align*}
Which gives our lower bound for $EY$ and finishes the proof of this lemma. \qed
\end{proof}

\subsection{Proof of Theorem \ref*{ssect:thmEZ}.}
\begin{proof}
We are interested in the absolute sum of $m$ noises, to estimate the error in Binary Protocol. First, let $Z$ be a random variable that denote the sum of noises. See that
$$
Z = \sum_{i=1}^m Z_i,
$$
where $Z_i$ is a random variable with distribution Geom($\alpha$), where $\alpha = e^{\frac{\epsilon}{\log_2(n)+1}}$.

Let $\varphi_{Z_i}(t)$ denotes the characteristic function of $Z_i$. We have
$$
\varphi_{Z_i}(t) = \frac{(\alpha-1)^2}{\alpha^2 - \alpha(e^t+e^{-t}) + 1} = \frac{(\alpha-1)^2}{\alpha^2 - 2\alpha \cos{t} + 1}.
$$
Let $\varphi_{Z}(t)$ denote the characteristic function of $Z$. As $Z_i$ are i.i.d. random variables, we get
$$
\varphi_{Z}(t) = \left(\varphi_{Z_1}\right)^m = \left(\frac{(\alpha-1)^2}{\alpha^2 - 2\alpha \cos{t} + 1}\right)^m.
$$
We will use techniques comprehensively described in~\cite{pinelis} to calculate expected value of $|Z|$. We have a following
\begin{fact} (From \cite{pinelis})
$$
\varphi_{Z_+}(t) = Ee^{itZ_+} = \frac{1}{2}[1+\varphi_{Z}(t)] + \frac{1}{2\pi i} \int\displaylimits_{-\infty}^{\infty}\left[\varphi_{Z}(t+u)-\varphi_{Z}(u)\right]\frac{du}{u},
$$
\end{fact}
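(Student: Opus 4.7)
The plan is to derive the identity by combining a pointwise decomposition of $e^{itZ_+}$ with the classical Fourier representation of the indicator of the positive half-line. The starting point is the elementary pointwise equality
$$e^{itZ_+} = 1 + (e^{itZ} - 1)\,\mathbf{1}_{\{Z > 0\}},$$
which is checked case by case: on $\{Z>0\}$ both sides equal $e^{itZ}$, and on $\{Z\leq 0\}$ both sides equal $1$. Taking expectations gives $\varphi_{Z_+}(t) = 1 + E[(e^{itZ}-1)\,\mathbf{1}_{\{Z>0\}}]$, reducing the task to expressing that mixed expectation purely in terms of $\varphi_Z$.

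The next step is to substitute the standard principal-value representation of the indicator,
$$\mathbf{1}_{\{x > 0\}} = \frac{1}{2} + \frac{1}{2\pi i}\,\mathrm{p.v.}\!\int_{-\infty}^{\infty}\frac{e^{iux}}{u}\,du\qquad (x\neq 0),$$
which is equivalent to the statement that the tempered-distribution Fourier transform of $\tfrac{1}{2}\,\mathrm{sgn}(x)$ is $\tfrac{1}{2\pi i u}$; a contour integral indented at the origin (closing in the upper half-plane for $x>0$ and in the lower half-plane for $x<0$) verifies that the right-hand side equals $1$ for $x>0$ and $0$ for $x<0$. Plugging this in with $x=Z$, peeling off the constant $\tfrac{1}{2}$ piece, and interchanging expectation and integral yields
$$\varphi_{Z_+}(t) = 1 + \tfrac{1}{2}\bigl(\varphi_Z(t)-1\bigr) + \frac{1}{2\pi i}\,\mathrm{p.v.}\!\int_{-\infty}^{\infty}\frac{\varphi_Z(t+u)-\varphi_Z(u)}{u}\,du,$$
which on combining the two constant terms collapses to exactly the stated formula.

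The main obstacle is rigorously justifying the interchange of expectation and principal-value integral, since the Fourier representation of the indicator is only conditionally convergent. The cleanest route is to replace $\mathrm{p.v.}\!\int$ by the symmetric truncation $\int_{-R}^{R}$, apply Fubini on the resulting absolutely convergent double integral, and then let $R\to\infty$. Near $u=0$ the difference $\varphi_Z(t+u)-\varphi_Z(u)$ is $O(u)$ because $\varphi_Z$ is everywhere differentiable ($Z$ has finite first moment, being a finite sum of symmetric geometrics), so the integrand is integrable at the origin; at large $|u|$ the oscillatory cancellation inside the truncated integral, together with the uniform bound $|\varphi_Z|\leq 1$, controls the tails. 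An alternative, fully mechanical justification is to insert a regulator $e^{-\lambda|u|}$, apply Fubini unconditionally to the resulting absolutely convergent integral, and pass to $\lambda\downarrow 0$ via dominated convergence. The null event $\{Z=0\}$ on which the pointwise indicator identity fails affects none of the expectations appearing in the derivation, so the displayed formula holds as stated.
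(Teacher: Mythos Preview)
Your derivation is correct and is essentially the standard route to this identity: decompose $e^{itZ_+}$ via the indicator of $\{Z>0\}$, invoke the principal-value Fourier representation of that indicator, and exchange expectation with the integral. The algebra checks out and the two justification sketches you give for the interchange (symmetric truncation, or a damping factor $e^{-\lambda|u|}$) are both viable.

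There is nothing to compare against in the paper, however: the authors do not prove this statement at all. It is quoted verbatim as a Fact from the external reference~\cite{pinelis} and used as a black box inside the proof of Theorem~\ref{ssect:thmEZ}. So your proposal supplies a proof where the paper simply cites one.

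One small wording issue: you call $\{Z=0\}$ a ``null event'', but in the setting of the paper $Z$ is integer-valued (a sum of symmetric geometrics), so $\Pr[Z=0]>0$. Your argument is still fine, because the factor $e^{itZ}-1$ vanishes on $\{Z=0\}$ and hence the ambiguity of the indicator representation at the origin contributes nothing to the relevant expectations; just phrase it that way rather than appealing to a measure-zero claim.
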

where $Z_+$ denotes $\max(0,Z)$, and the integral is understood in the principal value sense (see~\cite{pinelis}). Now see that
$$
|Z| = Z_+ + Z_- = Z_+ + (-Z_+) = 2Z_+,
$$
which is true for symmetric $Z$. Fortunately, this is the case here. Furthermore, we have
\begin{equation}\label{ssect:E|Z|}
E|Z| = 2EZ_+ = 2\frac{\varphi_{Z_+}'(0)}{i}.
\end{equation}
We have to calculate the derivative of $\varphi_{Z_+}(t)$ at $0$. It can be done in the following way
\begin{align}\label{ssect:phiat0}
\begin{split}
\varphi_{Z_+}'(0) &=  \frac{\varphi_{Z}'(0)}{2} + \frac{d}{dt}\left(\frac{1}{2\pi i} \int\displaylimits_{-\infty}^{\infty}\left[\varphi_{Z}(t+u)-\varphi_{Z}(u)\right]\frac{du}{u}\right)\left(0\right) \\
&= \frac{1}{2\pi i} \left(\int\displaylimits_{-\infty}^{\infty}\left[\varphi_{Z}'(t+u)\right]\frac{du}{u}\right)(0) = \frac{1}{2\pi i} \int\displaylimits_{-\infty}^{\infty}\left[\varphi_{Z}'(u)\right]\frac{du}{u}.
\end{split}
\end{align}
We used the fact that $\varphi_{Z}'(0) = 0$, because $Z$ is symmetric. Moreover, because $EZ$ exists, then $E|Z|$ also has to exist. That is why the integral has to be finite, so we were able to use Lebesgue theorem to swap order of derivation and integration. We can derive $\varphi_{Z}(t)$ which yields the following
\begin{equation}\label{ssect:f'(t)}
\varphi_{Z}'(t) = \frac{-2\cdot \alpha \cdot m \cdot \sin{t} \cdot \left(\alpha-1\right)^{2m}}{\left(\alpha^2 - 2\alpha \cos{t} + 1\right)^{m+1}}.
\end{equation}
Combining (\ref{ssect:E|Z|}), (\ref{ssect:phiat0}), (\ref{ssect:f'(t)}) and observing that $\varphi_{Z}'(t)$ is an even function, we obtain the following formula for $E|Z|$
$$
E|Z| = \int\displaylimits_{0}^{\infty} \frac{4\cdot \alpha \cdot m \cdot \sin{t} \cdot \left(\alpha-1\right)^{2m}}{t \cdot \pi \cdot \left(\alpha^2 - 2\alpha \cos{t} + 1\right)^{m+1}}dt,
$$
which completes the proof of this theorem. \qed
\end{proof}

\subsection{Proof of Lemma \ref*{ssect:lemmaEZ}.}
\begin{proof}
Let us define $\omega(t)$
$$
\omega(t) = \frac{4\cdot \alpha \cdot m \cdot \sin{t} \cdot \left(\alpha-1\right)^{2m}}{\pi \cdot \left(\alpha^2 - 2\alpha \cos{t} + 1\right)^{m+1}}
$$
We have 
$$
E|Z| = \int\displaylimits_{0}^{\infty} \frac{4\cdot \alpha \cdot m \cdot \sin{t} \cdot \left(\alpha-1\right)^{2m}}{t \cdot \pi \cdot \left(\alpha^2 - 2\alpha \cos{t} + 1\right)^{m+1}}dt = \int\displaylimits_{0}^{\infty} \frac{\omega(t)}{t}.
$$
One can easily see that $\omega(t)$ is periodic with period $2\pi$. We can therefore consider splitting the integral into $[2k \pi, 2(k+1)\pi]$ intervals and try to find a good lower bound for this integral. We have
$$
E|Z| = \sum_{k=0}^{\infty} \left( \int\displaylimits_{2k\pi}^{2(k+1)\pi} \frac{4\cdot \alpha \cdot m \cdot \sin{t} \cdot \left(\alpha-1\right)^{2m}}{t \cdot \pi \cdot \left(\alpha^2 - 2\alpha \cos{t} + 1\right)^{m+1}}dt \right).
$$
Consider any of these integrals for $k > 0$
\begin{equation}\label{ssect:lowerbound1}
\int\displaylimits_{2k\pi}^{2(k+1)\pi} \frac{4\cdot \alpha \cdot m \cdot \sin{t} \cdot \left(\alpha-1\right)^{2m}}{t \cdot \pi \cdot \left(\alpha^2 - 2\alpha \cos{t} + 1\right)^{m+1}}dt \geqslant 0.
\end{equation}
We will now explain why this inequality holds. First, observe that function $\omega(t)$ is an odd function on interval $[2k\pi,2(k+1)\pi]$. One can easily see, that $\omega(t)$ is positive on $[2k\pi,2k\pi+\pi]$ and negative on $[2k\pi + \pi, 2(k+1)\pi]$. Furthermore, the absolute value of $\frac{\omega(t)}{t}$ is greater on the first half of the interval, because of the decreasing factor $\frac{1}{t}$. This yields (\ref{ssect:lowerbound1}), which is true for all these intervals, and we will use it for all $k > 0$, so that leaves us with
\begin{align*}
E|Z| &= \sum_{k=0}^{\infty} \left( \int\displaylimits_{2k\pi}^{2(k+1)\pi} \frac{4\cdot \alpha \cdot m \cdot \sin{t} \cdot \left(\alpha-1\right)^{2m}}{t \cdot \pi \cdot \left(\alpha^2 - 2\alpha \cos{t} + 1\right)^{m+1}}dt \right) \\ &\geqslant \int\displaylimits_{0}^{2\pi} \frac{4\cdot \alpha \cdot m \cdot \sin{t} \cdot \left(\alpha-1\right)^{2m}}{t \cdot \pi \cdot \left(\alpha^2 - 2\alpha \cos{t} + 1\right)^{m+1}}dt.
\end{align*}
Plotting this function shows that almost all of the mass is concentrated around $0$, especially for $\alpha$ close to $1$. We could use the lower bound (\ref{ssect:lowerbound1}), however there is no point using it on the whole interval, because we would obtain trivial inequality $E|Z| \geqslant 0$. It requires slightly more subtle handling. Clearly, we could use (\ref{ssect:lowerbound1}) for any interval of type $[\pi - x, \pi + x]$, for $x \leqslant \pi$. This yields the following
\begin{align*}
E|Z| &\geqslant \int\displaylimits_{0}^{2\pi} \frac{4\cdot \alpha \cdot m \cdot \sin{t} \cdot \left(\alpha-1\right)^{2m}}{t \cdot \pi \cdot \left(\alpha^2 - 2\alpha \cos{t} + 1\right)^{m+1}}dt  \\
&\geqslant \int\displaylimits_{0}^{\eta_{\alpha,m}} \frac{4\cdot \alpha \cdot m \cdot \sin{t} \cdot \left(\alpha-1\right)^{2m}}{t \pi\left(\alpha^2 - 2\alpha \cos{t} + 1\right)^{m+1}}dt + \int\displaylimits_{2\pi-\eta_{\alpha,m}}^{2\pi} \frac{4\cdot \alpha \cdot m \cdot \sin{t} \cdot \left(\alpha-1\right)^{2m}}{t \pi \left(\alpha^2 - 2\alpha \cos{t} + 1\right)^{m+1}}dt,
\end{align*}
which is true for every $\eta_{\alpha,m} \in [0,\pi]$. Now see that if $\eta_{\alpha,m} < \frac{\pi}{2}$, we can bound the first integral in a following way
\begin{equation}\label{ssect:intA}
\int\displaylimits_{0}^{\eta_{\alpha,m}} \frac{4\alpha m \cdot \sin{t} \cdot \left(\alpha-1\right)^{2m}}{t \cdot \pi \cdot \left(\alpha^2 - 2\alpha \cos{t} + 1\right)^{m+1}}dt \geqslant \int\displaylimits_{0}^{\eta_{\alpha,m}} \frac{4 \alpha m \cdot \cos{t} \cdot \left(\alpha-1\right)^{2m}}{\pi \cdot \left(\alpha^2 - 2\alpha \cos{t} + 1\right)^{m+1}}dt,
\end{equation}
which follows from the fact that $x \leqslant \tan{x}$ for $x \in [0,\frac{\pi}{2})$. Furthermore
\begin{equation}\label{ssect:intB}
\int\displaylimits_{2\pi-\eta_{\alpha,m}}^{2\pi} \frac{4 \alpha m \cdot \sin{t} \cdot \left(\alpha-1\right)^{2m}}{t \cdot \pi \cdot \left(\alpha^2 - 2\alpha \cos{t} + 1\right)^{m+1}}dt \geqslant \int\displaylimits_{2\pi-\eta_{\alpha,m}}^{2\pi} \frac{4\alpha m \sin{t} \left(\alpha-1\right)^{2m}}{t \cdot \pi \cdot \left(\alpha - 1\right)^{2m+2}}dt,
\end{equation}
which comes from plugging $1$ instead of $\cos{t}$, which makes the function greater in terms of absolute value, but as it is negative on this interval, it yields a lower bound.
The function from (\ref{ssect:intA}) has an explicit anti-derivative. On the other hand, in (\ref{ssect:intB}) we have, in fact, an integral of $\frac{\sin{t}}{t}$ multiplied by a constant depending on $\alpha$ and $m$. There also still remains a problem of choosing $\eta_{\alpha, m}$. First we can observe that, for small enough $\eta_{\alpha,m}$ we have
$$
\int\displaylimits_{2\pi-\eta_{\alpha,m}}^{2\pi} \frac{\sin{t}}{t}dt \geqslant -\frac{\eta_{\alpha,m}^2}{10}.
$$
Obviously this holds for $\eta_{\alpha,m} = 0$. Let $Si(x)$ denote the antiderivative of $\frac{\sin{x}}{x}$. After derivating left side we obtain
\begin{align*}
\frac{d\left(Si(2\pi)-Si(2\pi-\eta_{\alpha,m})\right)}{d\eta_{\alpha,m}} &= -\frac{d\left(Si(2\pi-\eta_{\alpha,m}\right)}{d\eta_{\alpha,m}} = \frac{\sin\left(2\pi - \eta_{\alpha,m}\right)}{2\pi-\eta_{\alpha,m}} = \\
&= -\frac{\sin(\eta_{\alpha,m})}{2\pi - \eta_{\alpha,m}} \geqslant -\frac{\eta_{\alpha,m}}{2\pi - \eta_{\alpha,m}}.
\end{align*}
Derivating the right side yields $-0.2\eta_{\alpha,m}$. We can check when the left side is greater than the right side
$$
-\frac{\eta_{\alpha,m}}{2\pi - \eta_{\alpha,m}} \geqslant -0.2\eta_{\alpha,m} \iff \eta_{\alpha,m} \leqslant 2\pi-5
$$
So for $\eta_{\alpha,m} \leqslant \left(2\pi-5\right)$ we have
$$
\int\displaylimits_{2\pi-\eta_{\alpha,m}}^{2\pi} \frac{\sin{t}}{t}dt \geqslant -\frac{\eta_{\alpha,m}^2}{10}
$$
Now we pick $\eta_{\alpha,m}$ so that 
$$
-0.1\eta_{\alpha,m}^2 \cdot \frac{4\alpha m}{\pi(\alpha-1)^2} = -0.1.
$$
That gives us 
$$
\eta_{\alpha,m} = \sqrt{\frac{\pi(\alpha-1)^2}{4\alpha m}}.
$$
Plugging it all to our formula for expected magnitude of noise yields 
$$
E|Z| \geqslant \int\displaylimits_{0}^{\eta_{\alpha,m}} \frac{4\cdot a \cdot m \cdot \cos{t} \cdot \left(\alpha-1\right)^{2m}}{\pi \cdot \left(\alpha^2 - 2\alpha \cos{t} + 1\right)^{m+1}}dt - 0.1.
$$
We are now interested in the lower bound for this integral. One can see that
$$
\int\displaylimits_{0}^{\eta_{\alpha,m}} \frac{4\cdot a \cdot m \cdot \cos{t} \cdot \left(\alpha-1\right)^{2m}}{\pi \cdot \left(\alpha^2 - 2\alpha \cos{t} + 1\right)^{m+1}}dt \geqslant \int\displaylimits_{0}^{\eta_{\alpha,m}} \frac{4\cdot a \cdot m \cdot \cos({\eta_{\alpha,m}}) \cdot \left(\alpha-1\right)^{2m}}{\pi \cdot \left(\alpha^2 - 2\alpha \cos{t} + 1\right)^{m+1}}dt.
$$
This inequality is just plugging the smallest possible value of cosine on this interval. Furthermore, we have
$$
\int\displaylimits_{0}^{\eta_{\alpha,m}} \frac{4\alpha m \cdot \cos({\eta_{\alpha,m}}) \cdot \left(\alpha-1\right)^{2m}}{\pi\left(\alpha^2 - 2\alpha \cos{t} + 1\right)^{m+1}}dt \geqslant \int\displaylimits_{0}^{\eta_{\alpha,m}} \frac{4\alpha m \cdot \left(1-\frac{\eta_{\alpha,m}^2}{2}\right) \cdot \left(\alpha-1\right)^{2m}}{\pi\left(\alpha^2 - 2\alpha\cdot\left(1-\frac{t^2}{2}\right) + 1\right)^{m+1}}dt.
$$
This bound comes from the fact that $\cos{t} \geqslant \left(1-\frac{t^2}{2}\right)$. Let us call the integrand function $g(t)$. This function has a following anti-derivative $G(t)$:
$$
G(t) = \frac{4(\alpha-1)^{2m-2} \alpha m  t \left(1+\frac{\alpha  t^2}{(\alpha-1)^2}\right)^m    \left(1-\frac{\eta_{\alpha,m}^2}{2}\right) {}_2F_1\left(\frac{1}{2},1+m,\frac{3}{2},-\frac{\alpha\cdot t^2}{(\alpha-1)^2}\right) }{\left(\alpha^2 +\alpha  (t^2 - 2) + 1\right)^{m}\cdot \pi},
$$
where the ${}_2F_1(a,b,c,z)$ denotes ordinary hypergeometric function (see~\cite{Hypergeometric2F1}). One can easily see, that $G(0) = 0$. That leaves us with
$$
E|Z| \geqslant G(\eta_{\alpha,m}) - 0.1.
$$
Function $G(\eta_{\alpha,m})$ is quite complicated, but we can greatly simplify it. Let us begin with taking some of the $G(\eta_{\alpha,m})$ factors
\begin{align*}
\frac{(\alpha-1)^{2m-2}\cdot \left(1+\frac{\alpha \cdot \eta_{\alpha,m}^2}{(\alpha-1)^2}\right)^m}{\left(\alpha^2 +\alpha \cdot (\eta_{\alpha,m}^2 - 2) + 1\right)^{m}} &= \frac{(\alpha-1)^{-2}\cdot \left(1+\frac{\alpha \cdot \eta_{\alpha,m}^2}{(\alpha-1)^2}\right)^m}{\left(\frac{\alpha^2}{(\alpha-1)^{2}} +\frac{\alpha}{(\alpha-1)^{2}} \cdot (\eta_{\alpha,m}^2 - 2) + \frac{1}{(\alpha-1)^{2}}\right)^{m}} = \\
&= \frac{\left(\alpha-1\right)^{-2} \cdot \left(1+\frac{\alpha \cdot \eta_{\alpha,m}^2}{(\alpha-1)^2}\right)^m}{\left(1+\frac{\alpha \cdot \eta_{\alpha,m}^2}{(\alpha-1)^2}\right)^m} = \left(\alpha-1\right)^{-2}.
\end{align*}
Furthermore, we can expand ${}_2F_1(a,b,c,z)$ into Taylor series around $0$ in a following way:
$$
{}_2F_1\left(\frac{1}{2},1+m,\frac{3}{2},-\frac{\alpha\cdot t^2}{(\alpha-1)^2}\right) = 1-\frac{\alpha(m+1)t^2}{3(\alpha-1)^2} + O(t^4) \geqslant 1-\frac{\alpha \cdot (m+1) \cdot \eta_{\alpha,m}^2}{3\cdot\left(\alpha-1\right)^2}.
$$
Using these two observations we obtain
$$
G(\eta_{\alpha,m}) \geqslant \frac{4(\alpha-1)^{-2}\cdot \alpha \cdot m \cdot \eta_{\alpha,m} \cdot \left(1-\frac{\eta_{\alpha,m}^2}{2}\right) \cdot \left(1-\frac{\alpha \cdot (m+1) \cdot \eta_{\alpha,m}^2}{3\cdot\left(\alpha-1\right)^2}\right)}{\pi}
$$
We can further simplify this by recalling that $\alpha = e^{\frac{\epsilon}{\log_2(n)+1}}$ and $m = \gamma n$ and observing that $\left(1-\frac{\eta_{\alpha,m}^2}{2}\right) \cdot\left(1-\frac{\alpha \cdot (m+1) \cdot \eta_{\alpha,m}^2}{3\cdot\left(\alpha-1\right)^2}\right)$ is increasing with $n$. Let us call this value $c_n^*$. We can fix this for the smallest $n$ that we want to consider. See that, for example, for $n \geqslant 2^7$ we have $c_n^* \geqslant 1.43$. This leaves us with
\begin{align*}
G(\eta_{\alpha,m}) &\geqslant \frac{4 c_n^* \cdot (\alpha-1)^{-2}\cdot \alpha \cdot m \cdot \eta_{\alpha,m}}{\pi} = \\
&= \frac{4 c_n^* \cdot (\alpha-1)^{-2} \cdot \alpha \cdot m \cdot \sqrt{\frac{\pi(\alpha-1)^2}{4\alpha m}}}{\pi} = \\ 
&= \frac{2 c_n^* \cdot \sqrt{\alpha\cdot m}}{\sqrt{\pi} \cdot (\alpha-1)} \geqslant  \frac{2 c_n^* \cdot \sqrt{m}}{\sqrt{\pi} \cdot (\alpha-1)} = \frac{2c_n^* \cdot \sqrt{\gamma n}}{\sqrt{\pi} \cdot (e^{\frac{\epsilon}{\log_2(n)+1}}-1)} \geqslant \\
&\geqslant \frac{2c_n^* \cdot \sqrt{\gamma n}}{\sqrt{\pi} \cdot (e^{\frac{\epsilon}{\log_2(n)}}-1)} \geqslant \frac{\xi \log_2(n) \cdot 2c_n^* \cdot \sqrt{\gamma n}}{\epsilon \sqrt{\pi}},
\end{align*}
where $\xi$ is such that $e^{\xi\cdot x} \leqslant (1+x)$ for $x = (\frac{1}{2\log_2(n)})$. For example, in case we have $\epsilon = 0.5$ and $n \geqslant 2^7$ it suffices to take $\xi = 0.96$. In the end we have
$$
G(\eta_{\alpha,m}) \geqslant c_{n,\epsilon} \cdot \sqrt{\gamma} \cdot \frac{\log_2(n) \cdot \sqrt{n}}{\epsilon \sqrt{\pi}},
$$
where $c_{n,\epsilon} = 2\xi c_n^* $ which is, for moderate $n$ and $\epsilon$, greater than $1.4$. In fact, for $\epsilon = 0.5$ and $n \geqslant 2^7$ it is greater than $2$. In the end we have
$$
E|Z| \geqslant c_{n,\epsilon} \cdot \sqrt{\gamma} \cdot \frac{\log_2(n) \cdot \sqrt{n}}{\epsilon \sqrt{\pi}} - 0.1~,
$$
which completes the proof of this lemma.
\qed
\end{proof}

\subsection{Proof of Theorem~\ref{ssect:thmACC}}

\begin{proof}
Let us note that $\mathcal{S}$ is \textbf{not} connected if and only if there exists a subset of nodes from $\mathcal{S}$ with cardinality $1\leq k \leq m/2$ such that
there is no connection to any of remaining $m-k$ nodes. For a given subset of $\mathcal{S}$ of cardinality $k$ probability that no edge connects it to other $m-k$ nodes of $\mathcal{S}$ is  $(1-p)^{k(m-k)}$.

Let $A_k$ be an event that there exists such a "cut-off" subset of cardinality $k$.
Clearly, using union bound argument we get

$$\Pr[A_k]\leq (1-p)^{k(m-k)} {m \choose k}.$$ 
\noindent
Probability that $\mathcal{S}$ is not connected is equivalent to the event
$A_1\cup \ldots \cup A_k$ for $k=1,\ldots , m/2$. Again, using union bound

\begin{align*}
\Pr[A_1\cup \ldots \cup A_k] &  \leq \sum\limits_{i=1}^{m/2} \Pr[A_i] \leq \sum\limits_{i=1}^{m/2} (1-p)^{k(m-k)} {m \choose k} \leq \\
& \leq \sum\limits_{i=1}^{m/2} (1-p)^{k\frac{m}{2}} {m \choose k} = (\star).
\end{align*}

Since ${m \choose  k} \leqslant m^k$ we get

$$(\star) \leq \sum\limits_{i=1}^{m/2} \left( (1-p)^{\frac{m}{2}} m  \right)^{k} 
\leq \sum\limits_{i=1}^{\infty} \left( (1-p)^{\frac{m}{2}} m  \right)^{k} = \frac{(1-p)^{m/2}m}{1- (1-p)^{m/2}m} = (\star \star).$$

\noindent
Since the function $f(x)= \frac{a^x x}{ 1 - a^x x } $ is decreasing for $x> -\frac{1}{\log(a)}$ (if $0<a<1$) and from the assumption that $m \geq n/2$ we have

$$(\star \star) \leq \frac{(1-p)^{n/4}\frac{n}{2}}{1- (1-p)^{n/2}\frac{n}{2}}~.$$

Applying inequality $\exp(x) \geq 1+x$ and substituting $p=\frac{8\log n}{n}$ we  obtain
$$(\star \star) \leq \frac{\exp\left(-\frac{8\log(n)}{n}\right)\frac{n}{2}}{1- 1/2}=\exp\left(-\log(n^2)\right) n = \frac{1}{n}, $$
which concludes the proof of this theorem.
\qed

\end{proof}
\end{document}